\lstdefinelanguage{program}
{
morekeywords={if, then, else, fi, while, true, false, switch, case, skip, break},
sensitive = false
}
\newcommand{\LTS}{ATS}
\newcommand{\tsVars}{X}
\newcommand{\tsVar}{x}
\newcommand{\tsLocs}{L}
\newcommand{\ToolName}{DInvG}
\newcommand{\tsLoc}{\ell}
\newcommand{\tsTrans}{\mathcal{T}}
\newcommand{\tsTran}{\tau}
\newcommand{\tsInitcond}{\theta}
\newcommand{\tsGuardcond}{\rho}
\newcommand{\tsEval}{\sigma}
\newcommand{\Rset}{\mathbb{R}}
\newcommand{\tsAssertphi}{\varphi}
\newcommand{\tsAssertpsi}{\psi}
\newcommand{\tsPath}{\pi}
\newcommand{\tsMap}{\eta}
\newcommand{\fkCoeff}{c}
\newcommand{\fkConst}{d}
\newcommand{\ProcSmry}{S}
\newcommand{\smalltitle}[1]{\vspace{0.5em}\noindent\textbf{#1}}
\newcommand{\Comment}[1]{\hfill \textcolor{gray}{\(\triangleright\) #1}}
\newcounter{counter1}
\newcounter{counter2}
\begin{document}

\title{Affine Disjunctive Invariant Generation \texorpdfstring{\\}{ }with Farkas' Lemma}



\author{Jingyu Ke\inst{1}\orcidID{0009-0008-6848-3105} \and 
Hongfei Fu\inst{1}\Envelope\orcidID{0000-0002-7947-3446} \and
Hongming Liu\inst{1}\orcidID{0000-0001-8987-596X} \and 
Zhouyue Sun\inst{1}\orcidID{0009-0009-8863-5548} \and 
Liqian Chen\inst{2}\orcidID{0000-0001-8084-8009} \and 
Guoqiang Li\inst{1}\orcidID{0000-0001-9005-7112}}

\authorrunning{J. Ke \emph{et al.}}

\institute{Shanghai Jiao Tong University, China\\
\email{\{Windocotber, jt002845, hm-liu, sunzhouyue, li.g\}@sjtu.edu.cn} \and 
National University of Defense Technology, China\\
\email{lqchen@nudt.edu.cn}}

\maketitle

\begin{abstract}
In the verification of loop programs, disjunctive  invariants are essential to capture complex loop dynamics such as phase and mode changes. In this work, we develop a novel approach for the automated generation of affine disjunctive invariants for affine while loops via Farkas' Lemma, a fundamental theorem on linear inequalities. 
Our main contributions are two-fold. First, we combine Farkas' Lemma with a succinct control flow transformation to derive disjunctive invariants from the conditional branches in the loop. 
Second, we propose an invariant propagation technique that minimizes the invariant  computation effort by propagating previously solved invariants to yet unsolved locations as much as possible. 
Furthermore, we resolve the infeasibility checking in the application of Farkas' Lemma which has not been addressed previously, and extend our approach to nested loops via loop summary.  
Experimental evaluation over more than 100 affine while loops (mostly from SV-COMP 2023) demonstrates that our approach is promising to generate tight linear invariants over affine programs. 
\end{abstract}

\section{Introduction}

An invariant at a program location is an assertion that over-approximates the set of program states reachable to that location, i.e., every reachable program state to the location is guaranteed to satisfy the assertion. Since invariants provide an over-approximation for reachable program states, they play a fundamental role in program verification and can be used for safety~\cite{DBLP:books/daglib/0080029,DBLP:conf/pldi/PadonMPSS16,DBLP:conf/cav/AlbarghouthiLGC12}, reachability~\cite{DBLP:conf/tacas/ColonS01,DBLP:conf/cav/BradleyMS05,DBLP:conf/sas/AliasDFG10,DBLP:conf/vmcai/PodelskiR04,DBLP:conf/ictac/ChenXYZZ07,DBLP:conf/fm/DavidKKL16,DBLP:conf/pldi/AsadiC0GM21} and time-complexity~\cite{DBLP:journals/toplas/ChatterjeeFG19} analysis.
Invariant generation targets the automated  generation of invariants which can be used to aid the verification of critical program properties.

Automated approaches for invariant generation have been studied for decades and there have been an abundance of literature along this line of research. From different program objects, invariant generation targets numerical values (e.g., integers or real numbers)~\cite{DBLP:conf/cav/ColonSS03,DBLP:conf/pldi/Chatterjee0GG20,DBLP:conf/sas/Rodriguez-CarbonellK04,DBLP:conf/popl/SinghPV17,DBLP:conf/sas/BagnaraHRZ03,DBLP:conf/vmcai/BoutonnetH19}, arrays~\cite{DBLP:conf/vmcai/LarrazRR13,DBLP:conf/pldi/SrivastavaG09}, pointers~\cite{DBLP:conf/pldi/LeZN19,DBLP:journals/jacm/CalcagnoDOY11}, algebraic data types~\cite{DBLP:journals/pacmpl/KSG22}, etc. By different methodologies, invariant generation can be solved by abstract interpretation~\cite{DBLP:conf/popl/CousotC77,DBLP:conf/popl/CousotH78,DBLP:conf/vmcai/BoutonnetH19,DBLP:conf/sas/GopanR07}, constraint solving~\cite{DBLP:conf/vmcai/Cousot05,DBLP:conf/cav/ColonSS03,DBLP:conf/pldi/GulwaniSV08,DBLP:conf/pldi/Chatterjee0GG20}, inference~\cite{DBLP:conf/oopsla/DilligDLM13,DBLP:journals/jacm/CalcagnoDOY11,DBLP:journals/fmsd/SharmaA16,DBLP:conf/cav/0001LMN14,DBLP:conf/sigsoft/Xu0W20,DBLP:conf/cav/GanX0ZD20,DBLP:conf/fmcad/SomenziB11,DBLP:conf/tacas/McMillan08,DBLP:conf/sas/DonaldsonHKR11}, recurrence analysis~\cite{DBLP:conf/fmcad/FarzanK15,DBLP:conf/pldi/KincaidBBR17,DBLP:journals/pacmpl/KincaidCBR18}, machine learning~\cite{DBLP:conf/popl/0001NMR16,DBLP:conf/pldi/HeSPV20,DBLP:conf/pldi/YaoRWJG20,DBLP:conf/iclr/RyanWYGJ20}, 
data-driven approaches~\cite{DBLP:conf/pldi/LeZN19,DBLP:conf/esop/0001GHALN13,DBLP:conf/icse/NguyenKWF12,DBLP:conf/icse/CsallnerTS08,DBLP:conf/cav/ChenHWZ15,FSE2022}, etc. Most results in the literature consider a strengthened version of invariants, called \emph{inductive invariants}, that requires the inductive condition that the invariant at a program location is preserved upon every execution back and forth to the location.

In this work, we consider the automated generation of disjunctive invariants, i.e., invariants that are in the form of a disjunction of assertions. Compared with conjunctive invariants, disjunctive invariants capture disjunctive features such as multiple phases and mode transitions in loops. Although extensive research has been conducted on conjunctive invariant generation, verification of programs with complex disjunctive loops still demands a more precise and scalable approach, rather than merely generating conjunctive invariants at the loop entry point to summarize the entire loop.

Moreover, most of the existing disjunctive invariant analyses rely on specific program patterns, such as alternating loop paths~\cite{DBLP:conf/cav/SharmaDDA11} or periodic regular loops~\cite{DBLP:conf/sigsoft/XieCLLL16,DBLP:conf/tase/LinZCSXLS21}, which cannot be effectively generalized to real-world programs with arbitrary execution traces. Methods based on abstract interpretation or symbolic execution often fail to converge, or converge to low-precision invariants, when dealing with loops that are deeply nested or have large constant iteration counts. To address this, we propose a modular approach capable of handling complex loops and efficiently mitigating the computational explosion caused by the interleaving of paths in disjunctive loops.

We use constraint solving to generate real-valued affine disjunctive invariants. 
A typical constraint solving method is via Farkas' Lemma~\cite{DBLP:conf/cav/ColonSS03,DBLP:conf/sas/SankaranarayananSM04,DBLP:conf/cav/JiFFC22,oopsla22/scalable} 
that provides a complete characterization for affine invariants. However, the application of Farkas' Lemma is mostly limited to the conjunction of affine inequalities. The question on how to leverage Farkas' Lemma to affine disjunctive invariants remains to be a challenge. In this paper, we focus on the generation of affine disjunctive invariants in affine loops. 
An affine loop is a while loop where all conditional and assignment statements are in the form of linear expressions.

\smallskip
\noindent{\em Our contributions.} First,
we introduce a novel control flow transformation that extracts loop paths (from entry to exit) as standalone locations in a transition system and establishes transitions between them. 
Second, to alleviate the exponential computational overhead introduced by the control flow transformation, we propose an invariant propagation technique that propagates already-computed invariants to locations whose invariants yet need to be computed as much as possible.
Third, we fully resolve the infeasible situation in the application of Farkas' Lemma~\cite{DBLP:conf/sas/SankaranarayananSM04,oopsla22/scalable} and extend our approach to nested loops through loop summary. 
Fourth, we implement our approach as a prototype \ToolName\footnote{The tool implementation is available on GitHub: \url{https://github.com/WindOctober/DInvG}.}. Experimental evaluation with various state-of-the-art verification tools using over $100$ benchmarks from SV-COMP 2023~\cite{svcomp} and \cite{DBLP:conf/vmcai/BoutonnetH19}, shows that our approach is both tight (in the accuracy of the generated invariant) and is time efficient for real-valued disjunctive affine invariant generation. 

The remainder of this paper is structured as follows. Sec.~\ref{sec:Preliminary} revisits the fundamental definitions of affine transition systems and invariants, providing a variant of \textit{Farkas' Lemma} as well as the basic definitions of polyhedra. Sec.~\ref{sec:overview} offers an overview of how our tool \ToolName\ transforms programs and solves for disjunctive invariants. In Sec.~\ref{sec:alg}, we formalize the definition of control flow transformation and extract the corresponding affine transition systems, present the pseudocode for invariant propagation as well as optimizations for nested loop and infeasible traces, and prove that the disjunctive invariants generated are inductive. Sec.~\ref{sec:exp} demonstrates the efficiency and precision advantages of \ToolName\ compared to several state-of-the-art tools and conducts an ablation study on invariant propagation. Sec.~\ref{sec:related_work} compares our method with related verification approaches, elaborating on the conceptual and implementation differences between invariant propagation and control flow transformation.
\section{Preliminaries}
\label{sec:Preliminary}
Below we revisit affine transition systems~\cite{DBLP:conf/sas/SankaranarayananSM04} and their associated invariants, elucidate Farkas' Lemma, and outline fundamental principles from polyhedra theory. It is important to note that, within the scope of this paper, we treat linear and affine concepts equivalently.

\subsection{Affine Transition Systems and Invariants}

An \emph{affine inequality} over a set $V=\{x_1,\dots,x_n\}$ of real-valued variables is of the form $a_{1}x_{1} + \dots + a_{n}x_{n} + b \ge 0$, where $a_i$'s and $b$ are real coefficients. 
An \emph{affine assertion} over $V$ is a conjunction of affine inequalities over $V$. 

An affine transition system possesses a finite number of locations as well as real-valued variables, and specifies transitions between locations with affine guards and affine updates on the values of the variables. 

\begin{definition}[Affine Transition Systems~\cite{DBLP:conf/sas/SankaranarayananSM04}]\label{def:lts}
An \emph{affine transition system} (\LTS) is a tuple $\Gamma=\langle \tsVars, \tsVars', \tsLocs, \tsTrans, \tsLoc^*, \tsInitcond \rangle$:
\begin{itemize}
    \item 
    $\tsVars$ is a finite set of real-valued variables and $\tsVars'=\{\tsVar' \mid \tsVar \in \tsVars\}$ is the set of primed variables.
    \item 
    $\tsLocs$ is a finite set of \emph{locations} and $\tsLoc^*\in\tsLocs$ is the initial location.
    \item 
    $\tsTrans$ is a finite set of \emph{transitions} where each transition $\tsTran$ is a triple $\langle \tsLoc, \tsLoc', \tsGuardcond \rangle$ from location $\tsLoc$ to location $\tsLoc'$ with the guard affine assertion $\tsGuardcond$ over $\tsVars \cup \tsVars'$.
    \item
    $\tsInitcond$ is a disjunction of affine assertions over $\tsVars$ that specifies the initial condition at $\tsLoc^*$.  
\end{itemize}
The directed graph $\mbox{\sl DG}(\Gamma)$ of the \LTS{} $\Gamma$ is defined as the graph where the vertices are the locations of $\Gamma$ and there is an edge $(\tsLoc,\tsLoc')$ if and only if there is a transition $\langle \tsLoc, \tsLoc', \tsGuardcond \rangle$ with source location $\tsLoc$ and target location $\tsLoc'$.
\end{definition}

The intuition of an ATS $\Gamma=\langle \tsVars, \tsVars', \tsLocs, \tsTrans, \tsLoc^*, \tsInitcond \rangle$ is as follows. 
Each variable $\tsVar \in \tsVars$ represents the current value of the variable and each primed variable $\tsVar' \in \tsVars'$ represents the next value of its unprimed variable $\tsVar \in \tsVars$ after one step of transition. 
The transition $\langle \tsLoc, \tsLoc', \tsGuardcond \rangle$ specifies the jump from the current location $\tsLoc$ to the next location $\tsLoc'$ with the guard condition $\tsGuardcond$ specifying the condition to enable the transition. The guard condition involves both the current values (represented by $\tsVars$) and the next values (by $\tsVars'$), so that it can specify the relationship between the current and next values.  

Below we describe the semantics of an ATS. 
A \emph{valuation} over a finite set $V$ of variables is a function $\tsEval : V \rightarrow \Rset$ that assigns to each variable $\tsVar \in V$ a real value $\tsEval(\tsVar) \in \Rset$. 
We mostly consider valuations over the variables $\tsVars$ of an \LTS{} and simply abbreviate ``valuation over $\tsVars$'' as ``valuation'' (i.e., omitting $\tsVars$). 
Given an \LTS, a \emph{configuration} is a pair $(\tsLoc, \tsEval)$ with the intuition that $\tsLoc$ is the current location and $\tsEval$ is a valuation that specifies the current values for the variables. 

Given an affine assertion $\tsAssertphi$ and a valuation $\tsEval$ over a variable set $V$, we write $\tsEval \models \tsAssertphi$ to mean that $\tsEval$ satisfies $\tsAssertphi$, i.e., $\tsAssertphi$ is true when one substitutes the corresponding values $\tsEval(\tsVar)$ into all the variables $\tsVar$ in $\tsAssertphi$. Given an \LTS{} $\Gamma$, two valuations $\tsEval,\tsEval'$ and an affine assertion $\tsAssertphi$ over $\tsVars \cup \tsVars'$, we write $\tsEval, \tsEval' \models \tsAssertphi$ to mean that $\tsAssertphi$ is true when one substitutes every variable $\tsVar \in \tsVars$ by $\tsEval(\tsVar)$ and every variable $\tsVar' \in \tsVars'$ with $\tsEval'(\tsVar)$ in $\tsAssertphi$. Moreover, given two affine assertions $\tsAssertphi,\tsAssertpsi$ over a variable set $V$, we write $\tsAssertphi \models \tsAssertpsi$ to mean that $\tsAssertphi$ implies $\tsAssertpsi$, i.e., for every valuation $\tsEval$ over $V$ we have that $\tsEval \models \tsAssertphi$ implies $\tsEval \models \tsAssertpsi$. The case of disjunction of affine assertions is similar.

The semantics of an \LTS{} $\Gamma$ is given by its  paths. 
A \emph{path} $\tsPath$ of the \LTS{} $\Gamma$ is a finite sequence of configurations $(\tsLoc_0,\tsEval_0) \dots (\tsLoc_k,\tsEval_k)$ such that 
\begin{itemize}
    \item(\textbf{Initialization}) $\tsLoc_0 = \tsLoc^*$ and $\tsEval_0 \models \tsInitcond$, and 
    \item(\textbf{Consecution}) for every $0 \le j \le k-1$, there exists a transition $\tsTran = \langle \tsLoc, \tsLoc', \tsGuardcond \rangle$ such that $\tsLoc = \tsLoc_{j}$, $\tsLoc' = \tsLoc_{j+1}$ and $\tsEval_{j},\tsEval_{j+1} \models \tsGuardcond$. 
\end{itemize}

We say that a configuration $(\tsLoc, \tsEval)$ is \emph{reachable} if there exists a path  $(\tsLoc_0,\tsEval_0) \ldots$ $(\tsLoc_k,\tsEval_k)$ such that $(\tsLoc_k,\tsEval_k)=(\tsLoc, \tsEval)$. 
An \emph{invariant} at a location $\tsLoc$ of an \LTS{} is an assertion $\tsAssertphi$ such that for every path $\tsPath = (\tsLoc_0,\tsEval_0) \dots (\tsLoc_k,\tsEval_k)$ of the \LTS{} and each $0 \le i \le k$, it holds that $\tsLoc_i = \tsLoc$ implies $\tsEval_i \models \tsAssertphi$. An invariant $\tsAssertphi$ is \emph{affine} if $\tsAssertphi$ is an affine assertion over the variable set $\tsVars$, and is \emph{disjunctively affine} if $\tsAssertphi$ is a disjunction of affine assertions. 

In invariant generation, one often investigates a strengthened version of invariants called \emph{inductive invariants}. In this work, we present affine inductive invariants in the form of inductive affine assertion maps~\cite{DBLP:conf/cav/ColonSS03,DBLP:conf/sas/SankaranarayananSM04,oopsla22/scalable} as follows. 

An \emph{affine assertion map} (AAM) over an \LTS{} is a function $\tsMap$ that maps every location $\tsLoc$ of the \LTS{} to an affine assertion $\tsMap(\tsLoc)$ over the variables $\tsVars$. An AAM $\tsMap$ is called \emph{inductive} if the following holds:
\begin{itemize}
    \item (\textbf{Initialization})  $\tsInitcond \models \tsMap(\tsLoc^*)$;
    \item (\textbf{Consecution}) For every transition $\tsTran = \langle \tsLoc, \tsLoc', \tsGuardcond \rangle$, we have that $\tsMap(\tsLoc) \wedge \tsGuardcond \models \tsMap(\tsLoc')'$, where $\tsMap(\tsLoc')'$ is the affine assertion obtained by replacing every variable $\tsVar \in \tsVars$ in $\tsMap(\tsLoc')$ with its next-value counterpart $\tsVar' \in \tsVars'$.
\end{itemize}
By a straightforward induction on the length of a path under an \LTS, one could verify that every affine assertion in an inductive AAM is indeed an invariant.

\subsection{Farkas' Lemma and Polyhedra}

Farkas' Lemma~\cite{FarkasLemma} is a classical theorem in the theory of affine inequalities and previous results ~\cite{DBLP:conf/cav/ColonSS03,DBLP:conf/sas/SankaranarayananSM04,oopsla22/scalable} have applied the theorem to affine invariant generation. In these results, the form of Farkas' Lemma follows~\cite[Corollary 7.1h]{DBLP:books/daglib/0090562}.

\begin{theorem}[Farkas' Lemma]\label{thr:farkas}
Consider an affine assertion $\tsAssertphi$ over a set $V=\{x_1,\dots, x_n\}$ of real-valued variables 
as in Figure~\ref{tsAssertphi in Farkas' Lemma}.
\noindent When $\tsAssertphi$ is satisfiable (i.e., there is a valuation over $V$ that satisfies $\tsAssertphi$), it implies an affine inequality $\tsAssertpsi$ as in Figure~\ref{tsAssertpsi in Farkas' Lemma} (i.e., $\tsAssertphi \models \tsAssertpsi$) if and only if there exist non-negative real numbers $\lambda_0, \lambda_1, \dots, \lambda_m$ such that 
(i) $c_j = \sum^{m}_{i=1} \lambda_{i} \cdot a_{ij}$ for all  $1 \le j \le n$, and 
(ii) $d = \lambda_0 + \sum^{m}_{i=1} \lambda_{i} \cdot b_{i}$ as in Figure~\ref{tab:farkas}. 
Moreover, $\tsAssertphi$ is unsatisfiable if and only if the inequality $-1 \geq 0$ (as $\tsAssertpsi$) can be derived from above.
\end{theorem}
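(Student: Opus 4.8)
The plan is to prove the two directions separately and to read the whole statement geometrically as a cone-membership assertion. Write each conjunct of $\tsAssertphi$ as $\mathbf{a}_i\cdot\mathbf{x} + b_i \ge 0$ with $\mathbf{a}_i = (a_{i1},\dots,a_{in})$ for $1 \le i \le m$, and write $\tsAssertpsi$ as $\mathbf{c}\cdot\mathbf{x} + d \ge 0$ with $\mathbf{c} = (c_1,\dots,c_n)$. Let $P = \{\mathbf{x} \in \Rset^n : \mathbf{a}_i\cdot\mathbf{x} + b_i \ge 0 \text{ for all } i\}$ be the polyhedron defined by $\tsAssertphi$, and let $C = \{\sum_{i=1}^m \lambda_i(\mathbf{a}_i, b_i) + \lambda_0(\mathbf{0},1) : \lambda_0,\dots,\lambda_m \ge 0\} \subseteq \Rset^{n+1}$ be the finitely generated convex cone spanned by the rows $(\mathbf{a}_i,b_i)$ together with the extra generator $(\mathbf{0},1)$ that accounts for $\lambda_0$. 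Conditions (i)--(ii) say exactly that $(\mathbf{c},d) \in C$, so the entire theorem reduces to the equivalence: assuming $P \ne \emptyset$, one has $\tsAssertphi \models \tsAssertpsi$ iff $(\mathbf{c},d)\in C$.

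The easy direction ($\Leftarrow$, soundness) is a one-line computation: if $(\mathbf{c},d) = \sum_i \lambda_i(\mathbf{a}_i,b_i) + \lambda_0(\mathbf{0},1)$ with all $\lambda \ge 0$, then for every $\tsEval \models \tsAssertphi$ we have $\mathbf{c}\cdot\tsEval + d = \sum_i \lambda_i(\mathbf{a}_i\cdot\tsEval + b_i) + \lambda_0 \ge 0$, since each $\mathbf{a}_i\cdot\tsEval + b_i \ge 0$ and $\lambda_0 \ge 0$; hence $\tsEval \models \tsAssertpsi$. This needs neither satisfiability of $\tsAssertphi$ nor any convexity input.

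The substantive direction ($\Rightarrow$, completeness) is where I would do the real work. The structural fact I would invoke is that a finitely generated cone is closed and convex (Minkowski--Weyl), so I may apply the separating hyperplane theorem to the cone $C$. Suppose for contradiction that $\tsAssertphi \models \tsAssertpsi$ but $(\mathbf{c},d)\notin C$. Separation then yields $(\mathbf{y},t)\in\Rset^{n+1}$ with $(\mathbf{y},t)\cdot(\mathbf{c},d) < 0$ and $(\mathbf{y},t)\cdot z \ge 0$ for all $z \in C$; evaluating the latter on the generators gives $\mathbf{y}\cdot\mathbf{a}_i + t b_i \ge 0$ for all $i$ and $t \ge 0$. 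I would then manufacture a point of $P$ that violates $\tsAssertpsi$, splitting on $t$. If $t > 0$, the point $\tsEval = \mathbf{y}/t$ lies in $P$ (each constraint scales to $\mathbf{a}_i\cdot\tsEval + b_i \ge 0$) yet satisfies $\mathbf{c}\cdot\tsEval + d < 0$, contradicting $\tsAssertphi \models \tsAssertpsi$. If $t = 0$, then $\mathbf{y}$ is a recession direction of $P$ with $\mathbf{a}_i\cdot\mathbf{y} \ge 0$ and $\mathbf{c}\cdot\mathbf{y} < 0$; here I use $P \ne \emptyset$ to pick any $\tsEval_0 \in P$ and push it along $\mathbf{y}$, since $\tsEval_0 + s\mathbf{y} \in P$ for all $s \ge 0$ while $\mathbf{c}\cdot(\tsEval_0 + s\mathbf{y}) + d \to -\infty$, again contradicting validity. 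Thus $(\mathbf{c},d)\in C$.

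I expect the $t = 0$ case to be the main obstacle: it is exactly the point where the satisfiability hypothesis on $\tsAssertphi$ is indispensable, and it is also the case that separates a genuine affine certificate from a spurious homogeneous solution. Finally, for the ``moreover'' clause I would run the same cone argument with the target vector $(\mathbf{0},-1)$ in place of $(\mathbf{c},d)$: deriving $-1 \ge 0$ means precisely $(\mathbf{0},-1)\in C$, and the soundness computation then forces $\sum_i \lambda_i(\mathbf{a}_i\cdot\tsEval + b_i) + \lambda_0 = -1$ for every $\tsEval$, which is impossible on any $\tsEval \in P$, so $P = \emptyset$; conversely, if $P = \emptyset$ then separation of $(\mathbf{0},-1)$ from $C$ can only produce $(\mathbf{y},t)$ with $t > 0$ (as $(\mathbf{y},t)\cdot(\mathbf{0},-1) = -t < 0$), and $\tsEval = \mathbf{y}/t$ would then be a point of $P$, a contradiction. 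This last step reuses the $t>0$ branch and needs no satisfiability assumption, completing the characterization of unsatisfiability.
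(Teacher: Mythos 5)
Your proposal is correct. Note, however, that the paper does not prove this theorem at all: it quotes it as a classical result (citing Corollary~7.1h of Schrijver's book) and only remarks in one sentence that the ``if'' direction is the easy one, namely that a non-negative combination of the inequalities of $\tsAssertphi$ plus a non-negative constant trivially yields a consequence of $\tsAssertphi$. What you supply is essentially the standard textbook proof that the citation points to: recast conditions (i)--(ii) as membership of $(\mathbf{c},d)$ in the finitely generated cone spanned by the constraint rows together with $(\mathbf{0},1)$, use closedness of that cone (the genuinely nontrivial input, via Minkowski--Weyl) to apply strict separation, and convert the separating functional $(\mathbf{y},t)$ into a counterexample valuation --- either directly as $\mathbf{y}/t$ when $t>0$, or as a recession direction added to a point of $P$ when $t=0$. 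Your case split correctly isolates the only place where the satisfiability hypothesis on $\tsAssertphi$ is used (the $t=0$ branch), and your treatment of the ``moreover'' clause rightly observes that separating $(\mathbf{0},-1)$ forces $t>0$, so that characterization needs no such hypothesis. One routine step is left implicit: the separation theorem a priori gives a threshold $\alpha$ with $(\mathbf{y},t)\cdot z>\alpha$ on $C$ and $(\mathbf{y},t)\cdot(\mathbf{c},d)<\alpha$, and one must use $0\in C$ and positive homogeneity of $C$ to normalize this to $\alpha=0$ and to the inequality $(\mathbf{y},t)\cdot z\ge 0$ for all $z\in C$; this is standard and does not affect correctness. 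In short, the proof is sound and is the expected argument behind the cited result, so it complements rather than contradicts the paper's presentation.
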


\begin{figure}[h]
    \begin{minipage}[b]{0.4\linewidth}
        \begin{minipage}[b]{\linewidth}
            \centering
            \setlength{\arraycolsep}{0mm}{\small 
                $\tsAssertphi : $
                $\begin{array}{ccccccccccccc}
                    a_{11} & \cdot  & x_1 & + & \cdots & + & a_{1n} & \cdot  & x_n & + & b_1    & \geq & 0 \\
                           & \vdots &     &   &        &   &        & \vdots &     &   & \vdots &      &   \\
                    a_{m1} & \cdot  & x_1 & + & \cdots & + & a_{mn} & \cdot  & x_n & + & b_m    & \geq & 0    
                \end{array}$
            }
            \subcaption{$\tsAssertphi$ in Farkas' Lemma}
            \label{tsAssertphi in Farkas' Lemma}
        \end{minipage}
        \begin{minipage}[b]{\linewidth}
            \centering
            \setlength{\arraycolsep}{0mm}{
                \vspace{1.5ex}
                $\tsAssertpsi : $
                $\begin{array}{ccccccccccccc}
                    c_{1} & \cdot  & x_1 & + & \cdots & + & c_{n} & \cdot  & x_n & + & d    & \geq & 0 \\
                \end{array}$
            }
            \subcaption{$\tsAssertpsi$ in Farkas' Lemma}
            \label{tsAssertpsi in Farkas' Lemma}
        \end{minipage}
    \end{minipage}
    \begin{minipage}[b]{0.58\linewidth}
        \setlength{\arraycolsep}{0mm}{\small
        $$
        \begin{array}{c|rcccrcrcc}
            \lambda_{0}  &            &     &         &     &            &     &1      &\geq &0  \\ 
            \lambda_{1}  &a_{11}\cdot x_1 &+  &\cdots &+  &a_{1n}\cdot x_n &+  &b_1    & \geq &0  \\ 
            \vdots         &\vdots    &     &         &     &\vdots    &     &\vdots &       &         \\ 
            \lambda_{m}  &a_{m1}\cdot x_1 &+  &\cdots &+  &a_{mn}\cdot x_n &+  &b_m    &\ge &0  \\ 
            \hline
                             &c_1\cdot x_1    &+  &\cdots &+  &c_n\cdot x_n    &+  &d      & \geq &0  \\
                             &            &     &         &     &            &     &-1     &\geq &0
        \end{array}
        \begin{array}{cc}
              &  \\
             \left.
            \begin{array}{c}
                \vspace{1.5ex} \\
                    \\ 
                    \\
             \end{array} 
             \right\}&\tsAssertphi \\
             \leftarrow  &\tsAssertpsi \\
             \leftarrow  &\mbox{\sl false}
        \end{array}
        $$
        }
        \subcaption{The Tabular Form for Farkas' Lemma}
        \label{tab:farkas}
    \end{minipage}
    \caption{The $\tsAssertphi$, $\tsAssertpsi$ and Tabular Form for Farkas' Lemma~\cite{DBLP:conf/cav/ColonSS03,DBLP:conf/sas/SankaranarayananSM04}}
    \label{The Text Form and Tabular Form for Farkas' Lemma}
\end{figure}

Farkas' Lemma simplifies the inclusion of a polyhedron inside a halfspace into the satisfiability of a system of affine inequalities. We refer to the case of unsatisfiable $\tsAssertphi$ with $\tsAssertpsi:=-1\ge 0$ in the statement of Theorem~\ref{thr:farkas} as \emph{infeasible implication}. 
The application of Farkas' Lemma can be visualized by the tabular form in Figure~\ref{tab:farkas} (taken from~\cite{DBLP:conf/cav/ColonSS03}), and we multiply $\lambda_0,\lambda_1,\dots,\lambda_m$ with their inequalities in $\tsAssertphi$ and sum up them together to get $\tsAssertpsi$. For $1 \le j \le m$, we require $\lambda_j \ge 0$. 

A subset $P$ of $\Rset^n$ is a \emph{polyhedron} if $P=\{\mathbf{x}\in \Rset^n\mid \mathbf{A}\cdot\mathbf{x}\le \mathbf{b}\}$ for some real matrix $A\in \Rset^{m\times n}$ and real vector $\mathbf{b}\in \Rset^m$, where $\mathbf{x}$ is treated as a column vector and the comparison $\mathbf{A}\cdot\mathbf{x}\le \mathbf{b}$ is defined in the coordinate-wise fashion. 
A polyhedron $P$ is a \emph{polyhedral cone} if $P=\{\mathbf{x}\in \Rset^n\mid \mathbf{A}\cdot\mathbf{x}\le \mathbf{0}\}$ for some real matrix $A\in \Rset^{m\times n}$, 
where $\mathbf{0}$ is the $m$-dimensional zero column vector. 
It is well-known from the Farkas-Minkowski-Weyl Theorem~\cite[Corollary 7.1a]{DBLP:books/daglib/0090562} that any polyhedral cone $P$ can be represented as $P=\{\sum_{i=1}^k \lambda_i\cdot \mathbf{g}_i\mid \lambda_i\ge 0 \mbox{ for all }1\le i\le k\}$ for some real vectors $\mathbf{g}_1,\dots,\mathbf{g}_k$, where such vectors $\mathbf{g}_i$'s are called a collection of \emph{generators} for the polyhedral cone $P$. 

\section{An Overview of Our Approach}
\label{sec:overview}
Consider the affine program \( P_1 \) in Figure~\ref{fig:ExampleLoopAndTransformedLoop}. Our approach has three parts, namely control flow transformation, invariant computation and invariant propagation.  

\smallskip
\noindent\emph{Control Flow Transformation}. For the non-nested loop \(P_1\), we extract each execution path from the loop entry to the exit and transform it into the form of loop \(P_2\). Each \textbf{case} statement corresponds to a possible path in the original loop with a path condition $\phi$ of taking the path specified by 
the conjunction of a conditional formula $\phi_c$  and an assignment formula $\phi_a$. For example, in the first case of \(P_2\) that corresponds to the case of entering the if-branch in \(P_1\), we have \(\phi_a\) is 
\((x' = x + 1 \land y' = y + 1)\), \(\phi_c\) is \(x > 49\), and \(\phi\) is \(\phi_a \land \phi_c\).

\begin{figure}[t]
    \centering
    \begin{minipage}[t]{0.4\textwidth}
        \centering
        \includegraphics[width=\linewidth]{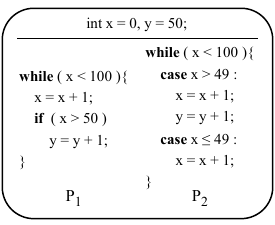}
        \subcaption{Source $P_1$ and Its Transformation $P_2$.}
        \label{fig:ExampleLoopAndTransformedLoop}
    \end{minipage}
    \hfill
    \begin{minipage}[t]{0.54\linewidth}
        \centering
        \vspace{-4.1cm}
        \begin{small}
            $\tsVars = \{ x, y \}$,
            $\tsLocs = \{ \tsLoc_{1}, \tsLoc_{2}^{*}\}$,
            $\tsTrans = \{ \tsTran_{1}, \tsTran_{2}, \tsTran_{3}, \tsTran_{4} \}$,
            $\tsInitcond : x = 0 \wedge y = 50$, 
            $\tsTran_{1} : \langle \tsLoc_{1}, \tsLoc_{1}, \tsGuardcond_{1} \rangle$, 
            $\tsTran_{2} : \langle \tsLoc_{1}, \tsLoc_{2}, \tsGuardcond_{2} \rangle$, 
            $\tsTran_{3} : \langle \tsLoc_{2}, \tsLoc_{2}, \tsGuardcond_{3} \rangle$, 
            $\tsTran_{4} : \langle \tsLoc_{2}, \tsLoc_{1}, \tsGuardcond_{4} \rangle$, 
            $$
            \tsGuardcond_{1} : \left[ 
            \setlength{\arraycolsep}{0mm}{
                    \begin{array}{rcl}
                    50 \le x  \le 99 \vspace{-0.5ex}\\ 
                    50 \le x' \le 99 \vspace{-0.5ex}\\ 
                    x' = x + 1 \vspace{-0.5ex}\\ 
                    y' = y + 1 
                \end{array}
            }
            \right],
            \tsGuardcond_{2} : \left[ 
            \setlength{\arraycolsep}{0mm}{
                \begin{array}{rcl}
                    50 \le x  \le 99 \vspace{-0.5ex}\\ 
                    x' \le 49 \vspace{-0.5ex}\\ 
                    x' = x + 1 \vspace{-0.5ex}\\ 
                    y' = y + 1 
                \end{array}
            }
            \right]
            $$\vspace{-1ex}
            $$
            \tsGuardcond_{3} : \left[ 
            \setlength{\arraycolsep}{0mm}{
                \begin{array}{rcl}
                    x  \le 49 \vspace{-0.5ex}\\ 
                    x' \le 49 \vspace{-0.5ex}\\ 
                    x' = x + 1 \vspace{-0.5ex}\\ 
                    y' = y 
                \end{array}
            }
            \right],
            \tsGuardcond_{4} : \left[ 
            \setlength{\arraycolsep}{0mm}{
                \begin{array}{rcl}
                    x  \le 49 \vspace{-0.5ex}\\ 
                    50 \le x' \le 99 \vspace{-0.5ex}\\ 
                    x' = x + 1 \vspace{-0.5ex}\\ 
                    y' = y 
                \end{array}
            }
            \right]
            $$
        \end{small}
    \subcaption{The \LTS{} Corresponding to $P_2$}
    \label{fig:ATSforTransformedLoop}
    \end{minipage}

    \caption{An example from~\cite{DBLP:conf/cav/SharmaDDA11} and its transformed form and corresponding \LTS{}}
    \vspace{-1em}
    \label{fig:ExampleCodeAndATS}
\end{figure}

Then, an ATS in Figure~\ref{fig:ATSforTransformedLoop} is directly derived from the transformed program. In this example, each \textbf{case} statement is considered as an independent location in the ATS, e.g., the location $\tsLoc_1$ stands for the first \textbf{case} in $P_2$. The transitions between the locations are derived from the jumps between different paths, where each transition guard $\tsGuardcond$ for a transition $\tsTran$, can be obtained through the formula:
$$
\tsGuardcond:= \phi_c \wedge \phi'_c[x'/x] \wedge \phi_a\wedge G\wedge G[x'/x]
$$
where, \(\phi_a\) and \(\phi_c\) represent the assignment and conditional formulas at the transition's start location, \(\phi'_c\) represents the conditional formula at the transition's end location, and \(G\) is the loop guard. After the \LTS{} is constructed, we apply the approaches~\cite{DBLP:conf/sas/SankaranarayananSM04,oopsla22/scalable} in Farkas' Lemma combined with the technique of invariant propagation to obtain invariants at all locations, and group them disjunctively together to obtain result invariants for original loop.
Note that the construction of the ATS is the key connection to apply Farkas' Lemma. 

\smallskip
\noindent\emph{Invariant Computation.} We first establish affine invariant templates at each location in the ATS by setting: 
$$
\tsMap(\tsLoc_{i}) := \fkCoeff_{\tsLoc_{i},1}x + \fkCoeff_{\tsLoc_{i},2}y + \fkConst_{\tsLoc_{i}} \geq 0\ ,\forall i \in \{1,2\}
$$
where, $\fkCoeff_{\tsLoc_{i},1},\fkCoeff_{\tsLoc_{i},2},\fkConst_{\tsLoc_{i}}$ are unknown coefficients to be resolved. Then, we generate the constraints from the initialization as well as consecution conditions via the Farkas' tabular in Figure~\ref{tab:farkas} and derive the initialization tabular and consecution tabular as shown in Figure~\ref{tab:farkasinitcons}. 
Setting \(\varphi=\theta\) and \(\psi=\eta(\tsLoc^*)\) in Theorem~\ref{thr:farkas}, the initialization of $\tsInitcond \models \tsMap(\tsLoc^*)$ results in linear constraints, 
while the consecution condition $\tsMap(\tsLoc) \wedge \tsGuardcond \models \tsMap(\tsLoc')'$ by setting \(\varphi=\tsMap(\tsLoc) \wedge \rho\) and \(\psi=\eta(\tsLoc')'\) results in quadratic constraints since we have a fresh $\lambda$, which we denote as $\mu$, multiplied by $\tsMap(\tsLoc)$.

\begin{figure}[t]
    \begin{minipage}{\linewidth}
        \centering
        \setlength{\arraycolsep}{0mm}{
            $$
            \begin{array}{c|rcccrcrcc}
                \lambda_{0}  &            &     &         &     &            &     &1      &\geq &0  \\ 
                \lambda_{1}  &a_{11}x_1 &+  &\cdots &+  &a_{1n}x_n &+  &b_1    &\Join_1 &0  \\ 
                \vdots         &\vdots    &     &         &     &\vdots    &     &\vdots &       &         \\ 
                \lambda_{m}  &a_{m1}x_1 &+  &\cdots &+  &a_{mn}x_n &+  &b_m    &\Join_m &0  \\ 
                \hline
                                 &c_{\tsLoc^*,1} x_1    &+  &\cdots &+  &c_{\tsLoc^*,n} x_n    &+  &d_{\tsLoc^*}      &\geq &0  \\
                                 &            &     &         &     &            &     &-1     &\geq &0
            \end{array}
            \begin{array}{cc}
                  &  \\
                 \left.
                \begin{array}{c}
                    \vspace{1.5ex} \\
                        \\ 
                        \\
                 \end{array} 
                 \right\}&\theta   \\
                 \leftarrow  &\tsMap(\tsLoc^*) \\
                 \leftarrow  &\mbox{\sl false}
            \end{array}
            $$
        }
        \subcaption{Initialization Tabular}
        \label{tab:farkasinit}
    \end{minipage}
    \begin{minipage}{\linewidth}
        \centering
        \setlength{\arraycolsep}{0mm}{\small
        $$
        \begin{array}{c|rcccrcrcccrcrcc}
            \mu        &c_{\tsLoc, 1} x_1     &+  &\cdots &+ &c_{\tsLoc, n} x_n    &    &     &     &     &     &      &+  &d_{\tsLoc}  &\geq  &0  \\
        
            \lambda_{0}  &       &    &       &    &       &    &     &     &     &     &      &     &1  &\geq     &0      \\
            
            \lambda_{1}  &a_{11} x_1       &+ &\cdots &+ &a_{1n} x_n       &+                                     
                             &a_{11}' x_{1}'   &+ &\cdots &+ &a_{1n}' x_{n}'   &+  &b_1&\Join_1     &0      \\ 
            
            \vdots         &\vdots                &    &         &    &\vdots                &     
                             &\vdots                &    &         &    &\vdots                &     &\vdots         &   &      \\ 
            
            \lambda_{m}  &a_{m1} x_1       &+ &\cdots &+ &a_{mn} x_n       &+  
                             &a_{m1}' x_{1}'   &+ &\cdots &+ &a_{mn}' x_{n}'   &+  &b_m&\Join_m     &0      \\ 
            \hline
                             &   &   &   &   &  &   
                             &c_{\tsLoc', 1} x_{1}' &+  &\cdots &+ &c_{\tsLoc', n} x_{n}'  &+  &d_{\tsLoc'}  &\geq  &0         \\
                             
                             &       &    &       &    &       &    &     &     &     &     &      &     &-1 &\geq  &0
        \end{array}
        \begin{array}{cc}
            \leftarrow  &\tsMap(\tsLoc)  \\
                        &                \\
             \left.
            \begin{array}{c}
                   \vspace{2ex} \\
                    \\ 
                    \\
             \end{array} 
             \right\}&\rho   \\
             \leftarrow  &\tsMap(\tsLoc')' \\
             \leftarrow  &\mbox{\sl false}
        \end{array}
        $$
        }        
        \subcaption{Consecution Tabular}
        \label{tab:farkascons}
    \end{minipage}
    \caption{Tabular for Initialization and Consecution~\cite{oopsla22/scalable}}
    \label{tab:farkasinitcons}
\end{figure}

In this context, we adopt a unified notation by employing $\eta(\tsLoc)$ to represent both affine expressions and affine inequalities interchangeably throughout the manuscript. To resolve the quadratic constraints from the consecution, the previous approach~\cite{DBLP:conf/sas/SankaranarayananSM04} has considered heuristics to guess the value of the multiplier $\mu$ through either practical rules such as factorization or setting $\mu$ manually to $0$ or $1$. These settings relax the original consecution definition into two stronger forms:

\begin{itemize}
    \item \textbf{Local Consecution}: For transition $\tau : \langle \tsLoc_i, \tsLoc_j, \tsGuardcond \rangle$, \(\tsGuardcond \models \tsMap(\tsLoc_j)' \geq 0\),
    \item \textbf{Incremental Consecution}: For transition $\tau : \langle \tsLoc_i, \tsLoc_j, \tsGuardcond \rangle$, \(\tsGuardcond \models \tsMap(\tsLoc_j)' \geq \tsMap(\tsLoc_i)\).
\end{itemize}

Then, we collect the derived constraints to constitute a formula $\Phi$ in CNF, which further expands into a DNF \(\Phi'\). Note that each disjunctive clause in the DNF \(\Phi'\) is an affine assertion defining a polyhedral cone, and we solve for the unknown coefficients of the templates by computing the generators of these polyhedral cones. We present a polyhedral cone in the DNF \(\Phi'\) of the ATS in Figure~\ref{tab:geninv}, along with corresponding generators, as shown in Figure~\ref{tab:cs}, where "point" means a single vector, "ray" means a vector that can be scaled by an arbitrary positive value, and "line" means a vector that can be scaled by any positive or negative value. When putting the generators back to invariants, we obtain the invariants shown in the right part of Figure~\ref{tab:geninv}. The resulting disjunctive invariant is the disjunction of invariants at all locations over an \LTS{} \( \Gamma \), corresponding to the invariants required on different loop paths.

\smallskip
\noindent{\em Invariant Propagation.} In the preceding exposition, the computation of conjunctive affine invariants adheres to existing approaches~\cite{DBLP:conf/sas/SankaranarayananSM04,oopsla22/scalable}. However, the strategies employed to resolve invariants at each location across the entire \(\LTS\) result in a significant decline in computational efficiency. To address this limitation, we introduce a propagation technique predicated on existing invariant computation results:

\begin{figure}[t]
    \begin{minipage}[b]{0.35\linewidth}
        \centering
        {\small
        $$
        \left[
        \setlength{\arraycolsep}{0pt}{
        \begin{array}{r} 
            \fkCoeff_{12} - \fkCoeff_{22} = 0, \\ 
            \fkCoeff_{21} \geq 0, \\
            \fkCoeff_{11} + \fkCoeff_{12} \geq 0, \\ 
            
            50\fkCoeff_{12} + \fkConst_{2} \geq 0, \\ 
            50\fkCoeff_{11} + \fkConst_{1} - 49\fkCoeff_{21} - \fkConst_{2} \geq 0 
        \end{array}
        }
        \right]
        $$}
        \subcaption{A clause in the DNF}
        \label{tab:cs}
    \end{minipage}
    \hfill
    \begin{minipage}[b]{0.55\linewidth}
        \centering
        \resizebox{0.95\textwidth}{!}{
        \setlength{\tabcolsep}{0.5mm}{
        \begin{tabular}{ccccccc|cc}
        \hline
        type & $c_{11}$  &$c_{12}$  &$d_{1}$  &$c_{21}$  &$c_{22}$  &$d_{2}$   &$\tsMap(\tsLoc_{1})$ &$\tsMap(\tsLoc_{2})$\\ 
        \hline
        {\rm \ point}&$0$&$0$&$0$&$0$&$0$&$0$ &$0\geq0$&$0\geq0$\\ 
        {\rm \ line}&$1$&$-1$&$0$&$0$&$-1$&$50$ &$x-y=0$&$-y+50=0$\\ 
        {\rm \ ray}&$0$&$0$&$49$&$1$&$0$&$0$ &$49\geq0$&$x\geq0$\\ 
        {\rm \ ray}&$0$&$0$&$1$&$0$&$0$&$0$ &$1\geq0$&$0\geq0$\\ 
        {\rm \ ray}&$1$&$0$&$-50$&$0$&$0$&$0$ &$x-50\geq0$&$0\geq0$\\ 
        {\rm \ ray}&$0$&$0$&$1$&$0$&$0$&$1$&$1\geq0$&$1\geq0$\\ 
        \hline
        \end{tabular}}
        }
        \subcaption{generators and their invariants}
        \label{tab:geninv}
    \end{minipage}
    \caption{Example of a clause in the DNF with its generators and invariants}
    \vspace{-1.5em}
    \label{tab:csgeninv}
\end{figure}

\begin{figure}[h]
    \centering
    \begin{minipage}[h]{0.28\textwidth}
        \includegraphics[width=1.1\linewidth,trim={0 0.5cm 0 0.2cm}]{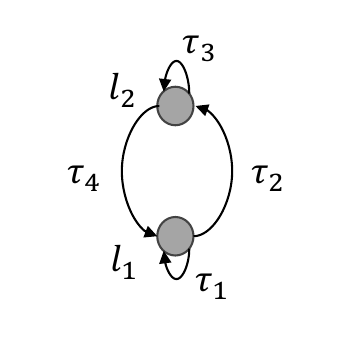}
        \subcaption{$DG(\Gamma)$}
        \label{fig:ExampleDirectedGraph}
        
    \end{minipage}
    \hfill
    \begin{minipage}[h]{0.35\textwidth}
        \includegraphics[width=\linewidth,trim={0 0.7cm 0 0.6cm}]{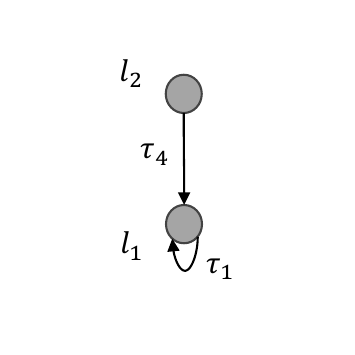}
        \subcaption{$DG(\Gamma)$ After Elimination}
        \label{fig:ExampleDirectedGraphEliminateInit} 
    \end{minipage}
    \hfill
    \begin{minipage}[h]{0.35\textwidth}
        \includegraphics[width=\linewidth,trim={0 0.7cm 0 0.6cm}]{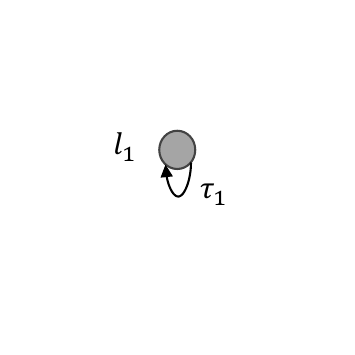}
        \subcaption{$DG(\Gamma)$ After Propagation}
        \label{fig:ExampleDirectedGraphAfterCompute} 
    \end{minipage}
    \vspace{-0.5em}
    \caption{Procedure of Invariant Propagation for example in Figure~\ref{fig:ExampleCodeAndATS}}
    \vspace{-1.5em}
    \label{fig:ExampleofInvPropagation}
\end{figure}

Consider the affine transition system $\Gamma$ in Figure~\ref{fig:ATSforTransformedLoop}. Its underlying directed graph $DG(\Gamma)$ is given in Figure~\ref{fig:ExampleDirectedGraph}. We first compute the invariant \(\tsMap(\tsLoc_{2}):= y = 50\wedge 0 \leq x \leq 49\) at the initial location \(\tsLoc_2\) as above. Then, we can eliminate all transitions pointing to \( \tsLoc_2 \) (the correctness of which is demonstrated in the following section), obtaining the graph in Figure~\ref{fig:ExampleDirectedGraphEliminateInit}. Notably, there exists a topological order, where \( \tsLoc_2\) precedes \( \tsLoc_1 \). Thus, we propagate the invariants at \( \tsLoc_2 \) along the transition \( \tsTran_4 \)'s transition guard $\tsGuardcond$ to establish the initial condition \(\tsInitcond\) of the \LTS{} over \( \tsLoc_1 \) in Figure~\ref{fig:ExampleDirectedGraphAfterCompute}, which is derived from removing \( \tsLoc_2 \) after propagation. Finally, by solving the invariants for the simplified ATS instead of the original ATS, we obtain the affine invariant \(\tsMap(\tsLoc_{1})= (x=y \wedge 50 \leq x \leq 99)\). Note that our invariant propagation is different from abstract interpretation, see Section~\ref{sec:related_work} for details.

Moreover, \(\eta(\tsLoc_{1})\) corresponds to the invariant within the loop, specifically representing the invariant at the first \textbf{case}. For the exit state \( x = 100 \), the disjunctive invariant generated within the loop, in conjunction with the loop exit condition \(\lnot G\), derives the program state \( x = 100\) outside the loop through stepwise deduction.

\section{Algorithmic Details in Our Approach}
\label{sec:alg}
Below we present our approach for generating affine disjunctive invariants over affine programs. We first illustrate our control flow transformation for non-nested loops, then our invariant propagation to reduce invariant computation, and finally the resolution of the infeasible implication and the extension to nested loops.

\subsection{Control Flow Transformation}\label{sec:cft}
We fix the set of program variables in a loop
as $X=\{x_1,\dots,x_n\}$ and identify it as the set of variables in the \LTS{} to be derived from the loop. We consider the canonical form of a non-nested affine while loop as in Figure~\ref{fig:CanonicalForm} similar to \cite{DBLP:conf/cav/JiFFC22}, where we have: 
\begin{itemize}
\item The column vector $\mathbf{x}=(x_1,\dots,x_n)^{\mathrm{T}}$ represents the vector of program variables, and $G$ is a disjunction of affine assertions that serves as the loop condition. 
\item  Each $\mathbf{F}_i$ ($1\le i\le m$) is an affine function, i.e.,  $\mathbf{F}_i(\mathbf{x})=\mathbf{A} \mathbf{x}+\mathbf{b}$ where $\mathbf{A}$ (resp. $\mathbf{b}$) is an $n\times n$ square matrix (resp. $n$-dimensional column vector) that specifies the affine update under the function $\mathbf{F}_i$ in the conditional branch $\phi_i$. 
The assignment $\mathbf{x}:=\mathbf{F}_i(\mathbf{x})$ is considered simultaneously for the variables in $\mathbf{x}$ so that in one execution step, the current valuation $\tsEval$ is updated to $\mathbf{F}_i(\tsEval)$. 
\item The statements $\delta_1,\dots,\delta_m$ specify whether the loop continues after the affine update of the conditional branches $\phi_1,\dots,\phi_m$. Each statement $\delta_i$ is either the \textbf{skip} statement that has no effect or the \textbf{break} statement that exits. 
\end{itemize}

\lstset{language=program}
\lstset{tabsize=3}
\newsavebox{\unnested}
\begin{lrbox}{\unnested}
\begin{lstlisting}[mathescape]
while $(G)$ {
  case $\phi_1$: $\mathbf{x}:=\mathbf{F}_1(\mathbf{x})$;$\delta_1$;
      $\vdots$ 
  case $\phi_m$: $\mathbf{x}:=\mathbf{F}_m(\mathbf{x})$;$\delta_m$;
}
\end{lstlisting}
\end{lrbox}

\begin{figure}[ht]
\begin{minipage}[b]{\linewidth} 
\centering
\resizebox{0.5\linewidth}{!}{\usebox{\unnested}}
\caption{The canonical form of a non-nested affine while loop}
\label{fig:CanonicalForm}
\end{minipage}
\end{figure}

Any non-nested affine while loop with a break statement can be transformed into the canonical form in Figure~\ref{fig:CanonicalForm} by recursively examining the substructures of the loop body. 
A detailed recursive transformation process is provided in Appendix~\ref{appendix:transform}.
Note that although the transformation into our canonical form may cause exponential growth in the number of conditional branches in the loop body, in practice a loop typically has a small number of conditional branches and further improvement can be carried out by removing invalid branches (i.e., those whose branch condition is unsatisfiable, such as $\tau_2$ in Figure~\ref{fig:ExampleCodeAndATS}). 

Moreover, such a canonical form is often necessary to capture precise disjunctive information in a while loop. Each case corresponds not merely to an individual loop path but also encapsulates the set of states at the entry of a loop. By finely partitioning the incoming program states according to branching conditions and formulating constraints among these states, we endeavor to precisely characterize the dynamics of internal state transitions within loops exhibiting multi-phase behavior.
Below we demonstrate our control flow transformation that transforms the canonical form into an ATS. 

Formally, the \LTS{} $\Gamma_W$ for a loop $W$ in our canonical form is 
given as follows:
\begin{itemize}
\item The set of locations is $\{\tsLoc_1,\dots, \tsLoc_m, \tsLoc_{e}\}$, 
where each $\tsLoc_i$ ($1\le i\le m$) corresponds to \(i\)-th case in the canonical form and $\tsLoc_{e}$ is the termination location of the loop. 
\item For each $1\le i\le m$ such that $\delta_i=\mathbf{break}$, we have the transition (we denote $\mathbf{x}':=(x'_1,\dots,x'_n)^\mathrm{T}$)
$$
\tau_{i}=(\tsLoc_i, \tsLoc_e, G \wedge \phi_i \wedge  \mathbf{x}'=\mathbf{F}_i(\mathbf{x}))
$$
that specifies the one-step jump from \(\tsLoc_i\) to the termination location $\tsLoc_e$.
\item For each $1\le i,j\le m$ such that $\delta_i\neq \mathbf{break}$, we have the transition 
$$
\tau_{ij}=(\tsLoc_i, \tsLoc_j, G \wedge \phi_i \wedge G[\mathbf{x}'/\mathbf{x}] \wedge \phi_j[\mathbf{x}'/\mathbf{x}] \wedge \mathbf{x}'=\mathbf{F}_i(\mathbf{x}))
$$ 
that specifies the jump from \(\tsLoc_i\) in the current loop iteration to \(\tsLoc_j\)  in the next loop iteration.
\item For each $1\le i\le m$ such that $\delta_i\neq \mathbf{break}$, we have the transition 
$$
\tau'_{i}=(\tsLoc_i, \tsLoc_{e}, G\wedge \phi_i\wedge (\neg G)[\mathbf{x}'/\mathbf{x}]\wedge \mathbf{x}'=\mathbf{F}_i(\mathbf{x}))
$$ 
for the jump from \(\tsLoc_i\) to the termination location $\tsLoc_e$.
\end{itemize} 

After the transformation, we remove transitions with an unsatisfiable guard condition to reduce the size of the derived \LTS{}. The transformation for our running example has been given in Figure~\ref{fig:ExampleCodeAndATS}. 
The transformed \LTS{} $\Gamma_W$ enables us to apply existing approaches~\cite{DBLP:conf/sas/SankaranarayananSM04,oopsla22/scalable} to generate invariants at the locations of the ATS $\Gamma_W$. 
Finally, recall that the overall disjunctive invariant for the ATS $\Gamma_W$ is the disjunction of the invariants at all the locations. 
    
As for the previous approaches~\cite{DBLP:conf/sas/SankaranarayananSM04,oopsla22/scalable} that only set locations at the initial location of the loop, our control flow transformation has different locations corresponding to different paths of the original loop, thereby achieving fine-grained piecewise invariants. 
Moreover, contrary to the granular program translations employed in traditional software model checking, which is similar to the control flow transformation, our motivation is mainly to integrate this transformation with Farkas' Lemma. By analyzing the transition patterns among internal loop paths, we aim to more effectively capture the phase-specific characteristics of multi-stage programs.

As demonstrated by our experiments, considering transitions between any pair of paths \((\tsLoc_i, \tsLoc_j)\), rather than partitioning the loop into more complex path regular expressions (using regular expressions to abstract loop behaviors over multiple iterations) allows us to maintain a balance between the precision of invariant generated and the efficiency of constraint solving. To further improve the efficiency, we have designed the invariant propagation algorithm shown below for these directed graphs constructed from paths.

\subsection{Invariant Propagation}
In the computation of invariants, previous approaches ~\cite{DBLP:conf/sas/SankaranarayananSM04,oopsla22/scalable} require to generate the invariants at all the locations of an \LTS{}. As invariant computation is usually expensive, it is important to explore optimizations that avoid redundant computations. In this section, we propose a novel invariant propagation technique that is applicable to any directed graph of an \LTS{} and achieves maximal efficiency in specific graph structures such as directed cycles. Below we  demonstrate the procedure of invariant propagation via Algorithm~\ref{alg:invprop}.

\begin{algorithm}[t]
    \caption{\textit{InvProp}($\Gamma$,\(DG(\Gamma)\),\(\tsLoc^*\))}
    \begin{algorithmic}[1]
        \REQUIRE $\Gamma$ --- \LTS{}, $DG(\Gamma)$ --- directed graph of $\Gamma$, $\tsLoc^*$ --- initial location of $\Gamma$.
        \ENSURE $\mathbf{\eta}$ --- an inductive assertion map for $\Gamma$.
        \STATE \textit{Init assertion map} \(\eta\) for \(\Gamma\) .  \label{line:1}
        \STATE $\text{SCCs} \gets \textit{Tarjan}(DG(\Gamma),\Gamma)$ \Comment{Find all SCCs in the directed graph} \label{line:2}
        \IF{$\textit{Len}(\text{SCCs}) \neq 1$} \label{line:3}
            \STATE $\text{id} \gets \textit{FindSCC}(\tsLoc^*,\text{SCCs})$ \Comment{Find the SCC containing $\tsLoc^*$} 
            \STATE $\text{stack}.\textit{push}(\text{id},\tsLoc^*)$ 
            \WHILE{$\neg \text{stack}.\textit{isEmpty}()$}
                \STATE $(\text{cur}, \tsLoc_s )\gets \text{stack}.\textit{pop}()$ \Comment{\(\tsLoc_s\) is the initial location of current SCC}
                \STATE $\Gamma_s \gets \text{SCCs}[\text{cur}]$ \Comment{\emph{cur} is the index of current traversed SCC}
                \STATE $\mathbf{\eta_s} \gets \textit{InvProp}(\Gamma_s,DG(\Gamma_s),\tsLoc_s)$ 
                \Comment{Process single SCC}
                \FOR{\textbf{each} transition $\tau$ directed from $\tsLoc_s$ to $\tsLoc_t$} \label{line:for1}
                    \STATE $\text{next} \gets \textit{FindSCC}(\tsLoc_t,\text{SCCs})$
                    \IF{\(\text{next} \neq \text{cur} \)}
                       \STATE $\text{stack}.\textit{push}(\text{next},\tsLoc_t)$ \Comment{Traverse SCCs in BFS order}
                    \ENDIF
                \ENDFOR
                \STATE $\mathbf{\eta} \gets \textit{Merge}(\mathbf{\eta},\mathbf{\eta_s})$ \label{line:merge}
                \Comment{Combine assertion maps disjunctively}
            \ENDWHILE
            \RETURN $\mathbf{\eta}$ 
        \ENDIF \label{line:19}
        \STATE $\mathbf{\eta} \gets \textit{InitInv}(\Gamma,\tsLoc^*)$ \label{line:init_inv}
        \Comment{Compute invariant only in initial location}
        \STATE $\Gamma_s \gets \textit{Project}(\Gamma,\tsLoc^*)$ \label{line:project}
        \Comment{Derive sub-\LTS{} \(\Gamma_s\) by removing \(l^*\)}
        \FOR{\textbf{each} transition $\tau$ directed from $\tsLoc^*$ to $\tsLoc_t$} \label{line:for2}
        \STATE $\mathbf{\eta_s} \gets \textit{InvProp}(\Gamma_s,DG(\Gamma_s),\tsLoc_t)$
        \STATE $\mathbf{\eta} \gets \textit{Merge}(\mathbf{\eta},\mathbf{\eta_s})$ \label{line:merge2}
        \ENDFOR
        \RETURN $\mathbf{\eta}$ \label{line:return_res_single}
    \end{algorithmic}
    \label{alg:invprop}
\end{algorithm}

The algorithm consists of the following steps: First, we initialize the assertion map and use the classical Tarjan's algorithm~\cite{tarjan1972depth} to compute a list of strongly connected components (SCCs) in the directed graph \(DG(\Gamma)\) (lines~\ref{line:1}-\ref{line:2}). Depending on whether the graph is decomposable, i.e., whether the size of the SCC list is more than one, we consider two cases: 

(\roman{counter1}) For a directed graph that can be decomposed into multiple SCCs, we start from the entry SCC and traverse the list of SCCs in breadth-first order, computing invariants for each SCC recursively and integrating them into the final assertion map \(\eta\) (lines~\ref{line:3}-\ref{line:19}).

(\roman{counter2}) For a directed graph that is a single SCC, we compute the initial invariants at the starting location using the previous method~\cite{oopsla22/scalable}, then eliminate the start location \(\tsLoc^*\), traverse each edge originating from \( \tsLoc^*\) , propagate the invariants to the remaining sub-graph, and disjunctively merge the returned inductive assertion mappings to produce the final disjunctive invariant (lines~\ref{line:init_inv}-\ref{line:return_res_single}).

We formally define the specific functions involved in the algorithm as follows:

\begin{enumerate}
    \item \textbf{Merge}$(\eta_1, \eta_2)$ (line~\ref{line:merge} and line~\ref{line:merge2}). We extend $\eta$ to a mapping from the set of locations $\tsLocs$ to disjunction of affine assertions, specifically representing affine inequalities in DNF. The Merge function is thereby defined as a new mapping such that, for any location $\tsLoc \in \tsLocs$:
    \[
    \text{Merge}(\eta_1, \eta_2)(\tsLoc) = \eta_1(\tsLoc) \lor \eta_2(\tsLoc)
    \]
    
    \item \textbf{Project}$(\Gamma, \tsLoc^*)$ (line~\ref{line:project}). Considering the directed graph $\text{DG}(\Gamma)$ corresponding to the \LTS{} $\Gamma$, we remove all edges associated with the node $\tsLoc^*$, as well as the node itself. The derived \LTS{} corresponding to the resulting sub-graph is denoted by $\text{Project}(\Gamma, \tsLoc^*)$.
\end{enumerate}

Note that in Algorithm~\ref{alg:invprop}, we omitted the initial condition \(\tsInitcond\) and the propagation effect along the transitions. 
At each point of our algorithm (line~\ref{line:for1} and line~\ref{line:for2}) that tackles a transition \(\langle \tsLoc_s, \tsLoc_t,\tsGuardcond \rangle\), the propagation effect is computed as the post image of the conjunction of the invariant on \(\tsLoc_s\) and the guard \(\tsGuardcond\) via polyhedral projection onto the primed variables $\tsVars'$ and serves as the initial condition \(\tsInitcond\) of the new \LTS{} including \(\tsLoc_t\).

\begin{example}
Recall the example in Section~\ref{sec:overview}, specifically Figure~\ref{fig:ExampleofInvPropagation}. Here, \(\Gamma\) is an indivisible SCC. After computing the invariant $\eta(\tsLoc_2)$ of the ATS $\Gamma$ at the initial location $\tsLoc_2$, we consider all transitions (i.e. $\{\tau_4\}$) starting from the initial location $\tsLoc_2$, as depicted in the figure. Then, we propagate the invariant through the transition $\tau_4$ to \(\tsLoc_1\). After project to obtain the remaining sub-\LTS{} $\Gamma_{sub}$, composed of \(\tsLoc_1\) and its self-loop transition, we recursively compute this indivisible SCC to obtain the complete inductive assertion map. \qed
\end{example}

Our invariant propagation technique applies to all 
\LTS{}. The main advantage to incorporate this technique is that it allows the generation of invariants only at the initial locations of (sub-)SCCs, thus avoiding the generation of the invariants at all locations as adopted in~\cite{DBLP:conf/sas/SankaranarayananSM04,oopsla22/scalable}. In the case that the directed graph of the input \LTS{} is a cycle, our invariant propagation reaches the highest efficiency that generates the invariant only at the initial location of the cycle and derives invariants at other locations of the cycle by propagation, since the cycle has an explicit topological order after the removal of the initial location. This advantage becomes more prominent in loops with a non-neglectable amount of conditional branches.
The soundness of the invariant propagation is given in the following theorem.

\begin{theorem}
    \label{thm:propagation}
    The assertion maps generated by the invariant propagation algorithm are inductive.
\end{theorem}

\begin{proof}
We prove by induction on the number $k$ of locations in the input \LTS{} \(\Gamma\)  that the assertion map obtained by our invariant propagation algorithm for the \LTS{} \(\Gamma\) is inductive. 

We first consider the base case, i.e., \(k = 1\). In this case, \(DG(\Gamma)\) has only one location, which is obviously indivisible. Here, the function \textbf{InitInv}(), previously mentioned as applying Farkas' Lemma for conjunctive invariant computation, is called. Therefore, the resulting assertion map is inductive, and its correctness is guaranteed by the prior results.

Assuming that the case when the size of $\Gamma$ equals \(k\) holds, we prove that it holds for $\Gamma$ of size \(k + 1\). 
For an \LTS{} \(\Gamma\) with \(k + 1\) locations, if it is divisible, it can be decomposed into several sub-SCCs \(\Gamma_{sub}\) with sizes less than or equal to \(k\). After the call to function \(\textbf{InvProp}()\) at Line 8, we obtain an inductive assertion mapping by the inductive condition. The \(\textbf{Merge}\)() function does not affect the inductive condition of the combined mapping. On the other hand, if it is indivisible, then our approach computes the invariant at its initial location and, after projecting away the initial location \(\tsLoc^*\), obtains a sub-\LTS{} \(\Gamma_{sub}\) of size \(k\). Similarly, the recursive call to invariant propagation at Line 20 and merging the returned results always yields an inductive assertion map by the inductive condition. \qed
\end{proof}

\subsection{Other Optimizations}\label{sec:otheropt}

\smalltitle{Loop Summary.} To address more general control flow, such as nested loops, we use the standard method of loop summary to express the input-output relationship of the inner loops (while adding fresh variables for input values) to handle nested loops. See Appendix~\ref{app:loop_sum} for the technical details.

\smalltitle{Infeasible Implication.} In the previous results~\cite{DBLP:conf/sas/SankaranarayananSM04,oopsla22/scalable}, the infeasible implication is not handled in their prototype. 
Recall the infeasible implication corresponding to $\eta(\tsLoc) \land \rho \models -1 \geq 0$ illustrated in Figure~\ref{tab:farkascons}. To fully address this issue, we can simply set \(\mu=1\) in Figure~\ref{tab:farkascons} so that the nonlinear multiplier \(\mu\) is eliminated. The correctness is given by the following theorem. 

\begin{theorem} \label{thr:infeasible}
Let \(\Gamma\) be an ATS. For any AAM \(\eta\) that fulfills the initial and consecution conditions derived from the ATS \(\Gamma\) with the original constraints for the infeasible implication as in each consecution tabular of Figure~\ref{tab:farkascons} (aimed at \(-1>=0\)) with each $\mu$ in an infeasible implication instantiated as $k$ for some $k>0$, it is equivalent to setting all $\mu$'s to $1$ while preserving the constraints of infeasible implication.
\end{theorem}

We present our proof in Appendix~\ref{app:thm_infeasible} and Appendix~\ref{sec:appendix_minkowski}. The main idea of the proof is that, for the infeasible implication case,  by scaling each \(\lambda_i\) other than \(\mu\), the consecution tabular used to generate polyhedra is transformed into an equivalent tabular with scaled lambda variables \(\lambda'_i\). so that it suffices to choose the multiplier $\mu$ to be $1$. 

\begin{remark}[Extensions]\label{rmk:extension}
Our approach can be extended in the following ways. To obtain a more precise path condition representation, one extension is by (i) distinguishing the remainders (e.g., even/odd) modulo a fixed positive integer (e.g. 2) when handling modular arithmetic and (ii) detecting hidden termination phases via the approach in \cite{DBLP:conf/cav/Ben-AmramG17}. To handle machine integers, another extension is by having a piecewise disjunctive treatment for the cases of overflow and non-overflow. 
By applying standard bottom-up analysis for interprocedural programs, our approach can also handle programs with function calls.
Finally, our approach could be further extended to floating point numbers by considering piecewise affine approximations~\cite{DBLP:conf/esop/Mine04,DBLP:conf/vmcai/Mine06}. 
\end{remark}

\section{Experimental Evaluation}
\label{sec:exp}
In this section, we present the evaluation of the implementation (referred to as {\ToolName}) of our approach to generate disjunctive affine invariants. We focus on the following two questions (\textbf{RQ1} and \textbf{RQ2}).

\begin{itemize}
    \item \textbf{RQ1:} How competitive is \ToolName\ when compared with other approaches?
    \item \textbf{RQ2:} How effective does invariant propagation enhance our approach?
\end{itemize}
\subsection{Experimental Setup}

\smalltitle{Implementation.} We implement our approach (including the algorithmic techniques in Section~\ref{sec:alg}) as a prototype \ToolName, dividing the implementation into front-end and back-end. The front-end utilizes Clang Static Analyzer~\cite{ClangStaticAnalyzer} to extract and transform C programs, processing programs into the format required by the back-end. The back-end is an extension of StInG~\cite{Sting} written in C++ and uses PPL 1.2~\cite{DBLP:conf/sas/BagnaraRZH02} for polyhedra manipulation (e.g., projection, generator computation, etc.), which generates invariants and propagate them to obtain a disjuntive invariant as the loop invariant.

\smalltitle{Environment.} All experiments are conducted on a machine equipped with a 12th-generation Intel(R) Core(TM) i7-12800HX CPU, 16 cores, 2304 MHz, 9.5GB RAM, running Ubuntu 20.04 (LTS). Following the competition settings of SV-COMP, for studies \textbf{RQ1} and \textbf{RQ2}, we impose a time limit of 900s.

\smalltitle{Benchmarks.} We have a total of 114 affine programs, $38.6\%$ of which have disjunctive features,
sourced from: 1) 105 benchmarks from the SV-COMP, ReachSafety-Loop track. We excluded those with arrays, pointers, and other non-numeric features, those with modulus, division, polynomial, and other non-linear operations. 2) 9 benchmarks from the recent paper~\cite{DBLP:conf/vmcai/BoutonnetH19}, which include complex nested loops and examples with disjunctive features.

\smalltitle{Methodology.} 
In \textbf{RQ1}, we compare \ToolName\ utilizing invariant propagation techniques with several state-of-the-art software verifiers:

\begin{itemize}
    \item Veriabs~\cite{SVCOMP2023Veriabs} is a state-of-the-art software verifier that is an integration of various strategies such as fuzz testing, $k$-induction, loop shrinking, loop pruning, full-program induction, explicit state model checking and other invariant generation techniques, which is capable to deal with programs with disjunctive features.
    \item CPAChecker~\cite{CPAchecker} is a well-developed software verifier that is based on bounded model checking and interpolation and has a comprehensive ability to verify various kinds of properties. 
    \item OOPSLA23~\cite{oopsla23} is a recent recurrence analysis tool that handles only loops with the ultimate strict alternation pattern that eventually the loop will alternate between different modes periodically and performs good on such class of programs, which thus excels in the verification of disjunctive programs with alternating modes.
    \item DIG~\cite{ICSE2022dig} is an invariant generation tool considering disjunctive features in programs and utilizes front-end CIVL~\cite{DIG_CIVL} to obtain symbolic execution traces. It employs dynamic analysis along with efficient algorithms from algebra and geometry to solve numerical invariant templates, thereby generating numerical invariants at any position within a program, which is capable of extensively handling the programs with array, nonlinear, linear and disjunctive features.
    \item IKOS with \emph{Polyset} domain from PPLite~\cite{IKOS,PPLite_domain} is a classic abstract interpretation framework with various interface supports. The \emph{Polyset} abstract domain is an efficient implementation of the powerset of polyhedra and serves as an alternative to the trace partitioning strategy implemented in Astree~\cite{ASTree}.
\end{itemize}

In \textbf{RQ2}, we focus on comparing the impact of the invariant propagation technique on the time efficiency. By contrasting the tool's performance when calculating invariants for each location individually against using invariant propagation, we analyze the role of invariant propagation.

\subsection{Tool Comparison (RQ1)}

Our work primarily focuses on the generation of disjunctive invariants, whereas tools like CPAChecker and Veriabs are specifically designed as bug finders for verifying assertions. However, by integrating the PPL library~\cite{DBLP:conf/sas/BagnaraRZH02} and Z3~\cite{z3}, we use the generated invariants to verify the correctness of assertions and demonstrate the precision of the invariants generated by \ToolName.

\begin{table}[t]
\centering
\small
\resizebox{0.95\textwidth}{!}{
\begin{tabular}{|cc|ccc|ccc|ccc|}
\hline
\multicolumn{2}{|c|}{Benchmark}                                                    & \multicolumn{3}{c|}{\ToolName }                                               & \multicolumn{3}{c|}{Veriabs}                                         & \multicolumn{3}{c|}{CPAChecker}                                                           \\ \hline
\multicolumn{1}{|c|}{Source}                       & \#Num                         & \multicolumn{1}{c|}{\#Ver.} & \multicolumn{1}{c|}{\#Unk.} & Time (s)                        & \multicolumn{1}{c|}{\#Ver.} & \multicolumn{1}{c|}{\#Unk.} & Time (s) & \multicolumn{1}{c|}{\#Ver.} & \multicolumn{1}{c|}{\#Unk.} & Time (s)                      \\ \hline
\multicolumn{1}{|c|}{loop-invariants}              & 5                             & \multicolumn{1}{c|}{4}      & \multicolumn{1}{c|}{1}      & 0.47                            & \multicolumn{1}{c|}{5}      & \multicolumn{1}{c|}{0}      & 153.31   & \multicolumn{1}{c|}{4}      & \multicolumn{1}{c|}{1}      & 1001.49                       \\ \hline
\multicolumn{1}{|c|}{loop-new}                     & 2                             & \multicolumn{1}{c|}{2}      & \multicolumn{1}{c|}{0}      & 0.11                            & \multicolumn{1}{c|}{0}      & \multicolumn{1}{c|}{2}      & 959.74   & \multicolumn{1}{c|}{0}      & \multicolumn{1}{c|}{2}      & 1807.20                       \\ \hline
\multicolumn{1}{|c|}{loop-invgen}                  & 5                             & \multicolumn{1}{c|}{4}      & \multicolumn{1}{c|}{1}      & 0.41                            & \multicolumn{1}{c|}{5}      & \multicolumn{1}{c|}{0}      & 160.51   & \multicolumn{1}{c|}{0}      & \multicolumn{1}{c|}{5}      & 4518.19                       \\ \hline
\multicolumn{1}{|c|}{loops-crafted-1}              & 25                            & \multicolumn{1}{c|}{20}     & \multicolumn{1}{c|}{5}      & 4.84                            & \multicolumn{1}{c|}{25}     & \multicolumn{1}{c|}{0}      & 4010.55  & \multicolumn{1}{c|}{0}      & \multicolumn{1}{c|}{25}     & 22607.55                      \\ \hline
\multicolumn{1}{|c|}{loop-simple}                  & 2                             & \multicolumn{1}{c|}{1}      & \multicolumn{1}{c|}{1}      & 2                               & \multicolumn{1}{c|}{1}      & \multicolumn{1}{c|}{1}      & 944.69   & \multicolumn{1}{c|}{1}      & \multicolumn{1}{c|}{1}      & 919.03                        \\ \hline
\multicolumn{1}{|c|}{loop-zilu}                    & 26                            & \multicolumn{1}{c|}{26}     & \multicolumn{1}{c|}{0}      & 0.77                            & \multicolumn{1}{c|}{25}     & \multicolumn{1}{c|}{1}      & 1064.60  & \multicolumn{1}{c|}{26}     & \multicolumn{1}{c|}{0}      & 307.18                        \\ \hline
\multicolumn{1}{|c|}{loops}                        & 18                            & \multicolumn{1}{c|}{15}     & \multicolumn{1}{c|}{3}      & 3.26                            & \multicolumn{1}{c|}{17}     & \multicolumn{1}{c|}{1}      & 536.33   & \multicolumn{1}{c|}{17}     & \multicolumn{1}{c|}{1}      & 1123.35                       \\ \hline
\multicolumn{1}{|c|}{loop-lit}                     & 10                            & \multicolumn{1}{c|}{10}     & \multicolumn{1}{c|}{0}      & 22.22                           & \multicolumn{1}{c|}{10}     & \multicolumn{1}{c|}{0}      & 280.87   & \multicolumn{1}{c|}{5}      & \multicolumn{1}{c|}{5}      & 5655.33                       \\ \hline
\multicolumn{1}{|c|}{loop-acceleration}            & 10                            & \multicolumn{1}{c|}{9}      & \multicolumn{1}{c|}{1}      & 0.32                            & \multicolumn{1}{c|}{9}      & \multicolumn{1}{c|}{1}      & 493.13   & \multicolumn{1}{c|}{9}      & \multicolumn{1}{c|}{1}      & 1030.78                       \\ \hline
\multicolumn{1}{|c|}{loop-crafted}                 & 2                             & \multicolumn{1}{c|}{2}      & \multicolumn{1}{c|}{0}      & 0.09                            & \multicolumn{1}{c|}{2}      & \multicolumn{1}{c|}{0}      & 49.59    & \multicolumn{1}{c|}{2}      & \multicolumn{1}{c|}{0}      & 27.97                         \\ \hline
\multicolumn{1}{|c|}{\cite{DBLP:conf/vmcai/BoutonnetH19}} & 9                             & \multicolumn{1}{c|}{8}      & \multicolumn{1}{c|}{1}      & 2.12                            & \multicolumn{1}{c|}{9}      & \multicolumn{1}{c|}{0}      & 286.24   & \multicolumn{1}{c|}{4}      & \multicolumn{1}{c|}{5}      & 4576.98                       \\ \hline
\multicolumn{1}{|c|}{\textbf{Total}}                        & \textbf{114} & \multicolumn{1}{c|}{101}    & \multicolumn{1}{c|}{13}     & \textbf{34.65} & \multicolumn{1}{c|}{108}    & \multicolumn{1}{c|}{6}      & 8939.56  & \multicolumn{1}{c|}{68}     & \multicolumn{1}{c|}{46}     & 43573.42                      \\ \hline \hline
\multicolumn{2}{|c|}{Benchmark}                                                    & \multicolumn{3}{c|}{OOPSLA23}                                                               & \multicolumn{3}{c|}{DIG}                                             & \multicolumn{3}{c|}{IKOS + PPLite}                                                        \\ \hline
\multicolumn{1}{|c|}{Source}                       & \#Num                         & \multicolumn{1}{c|}{\#Ver.} & \multicolumn{1}{c|}{\#Unk.} & Time (s)                        & \multicolumn{1}{c|}{\#Ver.} & \multicolumn{1}{c|}{\#Unk.} & Time (s) & \multicolumn{1}{c|}{\#Ver.} & \multicolumn{1}{c|}{\#Unk.} & \multicolumn{1}{c|}{Time (s)} \\ \hline
\multicolumn{1}{|c|}{loop-invariants}              & 5                             & \multicolumn{1}{c|}{1}      & \multicolumn{1}{c|}{4}      & 14.09                           & \multicolumn{1}{c|}{0}       & \multicolumn{1}{c|}{5}       & 2344.12          & \multicolumn{1}{c|}{3}       & \multicolumn{1}{c|}{2}       & 0.88       \\ \hline
\multicolumn{1}{|c|}{loop-new}                     & 2                             & \multicolumn{1}{c|}{0}      & \multicolumn{1}{c|}{2}      & 5.83                            & \multicolumn{1}{c|}{0}       & \multicolumn{1}{c|}{2}       & 241.68         & \multicolumn{1}{c|}{1}       & \multicolumn{1}{c|}{1}       & 988.06        \\ \hline
\multicolumn{1}{|c|}{loop-invgen}                  & 5                             & \multicolumn{1}{c|}{4}      & \multicolumn{1}{c|}{1}      & 14.78                           & \multicolumn{1}{c|}{1}       & \multicolumn{1}{c|}{4}       & 264.34         & \multicolumn{1}{c|}{5}       & \multicolumn{1}{c|}{0}       & 1.03        \\ \hline
\multicolumn{1}{|c|}{loops-crafted-1}              & 25                            & \multicolumn{1}{c|}{22}     & \multicolumn{1}{c|}{3}      & 82.03                           & \multicolumn{1}{c|}{10}       & \multicolumn{1}{c|}{15}       & 6030.28         & \multicolumn{1}{c|}{0}       & \multicolumn{1}{c|}{25}       & 8270.92        \\ \hline
\multicolumn{1}{|c|}{loop-simple}                  & 2                             & \multicolumn{1}{c|}{0}      & \multicolumn{1}{c|}{2}      & 5.82                            & \multicolumn{1}{c|}{0}       & \multicolumn{1}{c|}{2}       &  493.24        & \multicolumn{1}{c|}{2}       & \multicolumn{1}{c|}{0}       & 10.01       \\ \hline
\multicolumn{1}{|c|}{loop-zilu}                    & 26                            & \multicolumn{1}{c|}{0}      & \multicolumn{1}{c|}{26}     & 68.24                           & \multicolumn{1}{c|}{19}       & \multicolumn{1}{c|}{7}       & 6878.38         & \multicolumn{1}{c|}{17}       & \multicolumn{1}{c|}{9}       & 1827.31       \\ \hline
\multicolumn{1}{|c|}{loops}                        & 18                            & \multicolumn{1}{c|}{4}      & \multicolumn{1}{c|}{14}     & 48.36                           & \multicolumn{1}{c|}{3}       & \multicolumn{1}{c|}{15}       & 5241.99         & \multicolumn{1}{c|}{7}       & \multicolumn{1}{c|}{11}       & 909.04      \\ \hline
\multicolumn{1}{|c|}{loop-lit}                     & 10                            & \multicolumn{1}{c|}{7}      & \multicolumn{1}{c|}{3}      & 29.59                           & \multicolumn{1}{c|}{3}       & \multicolumn{1}{c|}{7}       & 2024.29         & \multicolumn{1}{c|}{5}       & \multicolumn{1}{c|}{5}       & 3993.86     \\ \hline
\multicolumn{1}{|c|}{loop-acceleration}            & 10                            & \multicolumn{1}{c|}{8}      & \multicolumn{1}{c|}{1}      & 27.48                           & \multicolumn{1}{c|}{3}       & \multicolumn{1}{c|}{7}       & 2263.40         & \multicolumn{1}{c|}{6}       & \multicolumn{1}{c|}{4}       & 1.66       \\ \hline
\multicolumn{1}{|c|}{loop-crafted}                 & 2                             & \multicolumn{1}{c|}{2}      & \multicolumn{1}{c|}{0}      & 5.63                            & \multicolumn{1}{c|}{0}       & \multicolumn{1}{c|}{2}       &  503.76        & \multicolumn{1}{c|}{2}       & \multicolumn{1}{c|}{0}       & 0.33      \\ \hline
\multicolumn{1}{|c|}{\cite{DBLP:conf/vmcai/BoutonnetH19}} & 9                             & \multicolumn{1}{c|}{6}      & \multicolumn{1}{c|}{3}      & 28.98                           & \multicolumn{1}{c|}{1}       & \multicolumn{1}{c|}{8}       & 497.92         & \multicolumn{1}{c|}{8}       & \multicolumn{1}{c|}{1}       & 2.04        \\ \hline
\multicolumn{1}{|c|}{\textbf{Total}}                        & \textbf{114} & \multicolumn{1}{c|}{55}     & \multicolumn{1}{c|}{59}     & 330.83                          & \multicolumn{1}{c|}{40}       & \multicolumn{1}{c|}{74}       & 26783.40         & \multicolumn{1}{c|}{56}       & \multicolumn{1}{c|}{58}       & 16005.15        \\ \hline
\end{tabular}
}
\vspace{0.8em}
\caption{Comparisons Over 114 Benchmarks}

\label{exp:ComparisonOverTools}
\end{table}

The complete comparison results of \ToolName\ with other tools are presented in Table~\ref{exp:ComparisonOverTools}. In the table, \textit{Source} indicates the source category of the benchmark. The term \textit{\#Ver.} represents the number of examples correctly verified by the verifier, and \textit{\#Unk.} (unknown) mainly arises from the following situations: a) The front-end fails to parse correctly, resulting in program crashes. b) Returns \textbf{Unknown}. c) Timeouts. For the benchmarks from~\cite{DBLP:conf/vmcai/BoutonnetH19}, which do not contain assertions to be verified, we modify the invariants generated by our \ToolName\ as assertions and test them over the other tools to obtain results.

From the table, it is evident that \ToolName\ typically requires less than 0.3 seconds on average for verification, and its overall verification accuracy is very close to that of the SV-COMP 2023 Reachability track winner Veriabs, while significantly outperforming Veriabs in terms of time efficiency by 10X to 1000X. This is mainly because Veriabs employs a rich strategy to assist verification, granting it a stronger verification capability but also requiring more time for most examples. CPAChecker experienced a broad range of timeouts in examples with complex loops that could not be verified within a finite unfolding of loops. This is due to the intrinsic limitations of its bounded model checking approach, and its loop unwinding strategy also results in verification times on the dataset that significantly exceed those of other tools. 

Despite the fact that the tool from~\cite{oopsla23} has the second fewest number of verified benchmarks, it outperforms other tools in examples suitable for recurrence analysis. For DIG, we employ it to generate loop invariants and post conditions, and use Z3~\cite{z3} prover to verify the assertion. Nevertheless, the frontend of DIG necessitates CIVL's reliance on extracting symbolic execution paths from the program. When processing loops, it similarly depends on loop unrolling, and if it cannot fully unroll loops within a small bound, it determines that locations after the loop are unreachable. Consequently, it exhibits issues analogous to those of CPAChecker. Additionally, for some randomly assigned variables in SV-COMP, DIG lacks a suitable modeling. We have already reported several bugs via issues on GitHub. As a classical framework for abstract interpretation, IKOS with PPLite did not deliver optimal verification outcomes on the dataset. In some straightforward nested loops and more extensive loop iterations, it either failed to converge to a fixed point, or the precision of the invariants obtained upon convergence was insufficient to verify assertions, thereby causing timeouts or unknown in certain instances. 

In summary, we conclude that \ToolName\ significantly outperforms other tools such as Veriabs in time efficiency for affine numerical programs, while its verification capability is not inferior to the SV-COMP winner Veriabs. We also conducted an in-depth analysis of the cases where our \ToolName\ returns \textbf{Unknown}. The primary reasons for the issues include: a) the absence of type range constraints at the front end, b) reliance on modular arithmetic, c) the need for more complex loop generalizations, d) exceeding the computational precision of the PPL library, and e) exponential arithmetic that surpasses the modeling capabilities of linear templates. 7-8 of these cases could be further solved by optimizing implementations. In the verifiable cases, the preliminary implementation of \ToolName\ has already far surpassed existing methods in efficiency.

\subsection{Ablation Study in Invariant Propagation (RQ2)}

\begin{table}[t]
    \centering
    \resizebox{0.9\linewidth}{!}{%
        \begin{tabular}{|cc|cccccc|}
        \hline
        \multicolumn{2}{|c|}{\multirow{2}{*}{Benchmark}} & \multicolumn{6}{c|}{\ToolName}                                                                                                                                    \\ \cline{3-8} 
        \multicolumn{2}{|c|}{}                           & \multicolumn{3}{c|}{No PPG}                                                               & \multicolumn{3}{c|}{PPG}                                             \\ \hline
        \multicolumn{1}{|c|}{Source}        & \#Num      & \multicolumn{1}{c|}{\#Ver.} & \multicolumn{1}{c|}{\#Unk.} & \multicolumn{1}{c|}{Time (s)} & \multicolumn{1}{c|}{\#Ver.} & \multicolumn{1}{c|}{\#Unk.} & Time (s) \\ \hline
        \multicolumn{1}{|c|}{SV-COMP}       & 105        & \multicolumn{1}{c|}{91}     & \multicolumn{1}{c|}{14}     & \multicolumn{1}{c|}{1825.53}  & \multicolumn{1}{c|}{93}     & \multicolumn{1}{c|}{12}     & 32.53    \\ \hline
        \multicolumn{1}{|c|}{paper}         & 9          & \multicolumn{1}{c|}{8}      & \multicolumn{1}{c|}{1}      & \multicolumn{1}{c|}{10.76}    & \multicolumn{1}{c|}{8}      & \multicolumn{1}{c|}{1}      & 2.12     \\ \hline
        \end{tabular}
    }
    \vspace{0.8em}
    \caption{Experiment for Invariant Propagation}
    \label{tab:propagation}
\end{table}
\begin{figure}[t]
    \centering
    \includegraphics[width=0.6\linewidth,trim={1cm 0.5cm 0 0.5cm}]{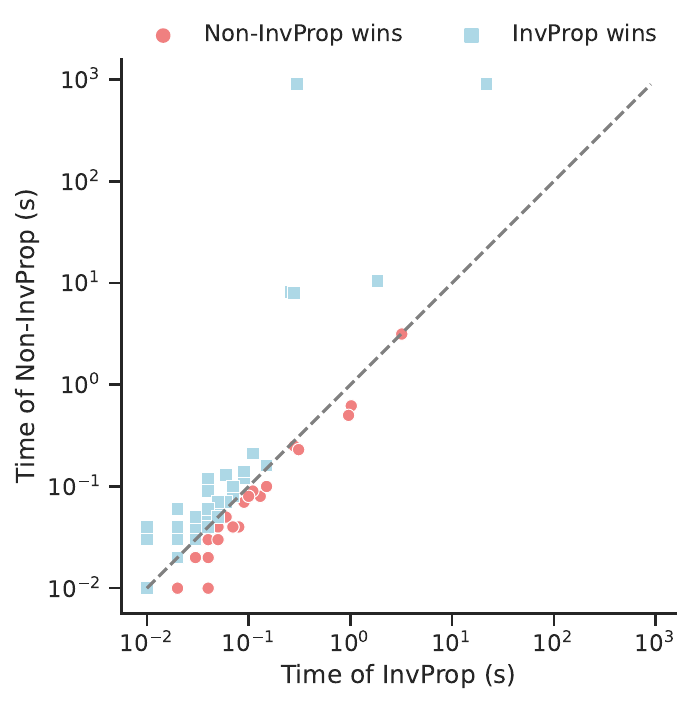}
    \caption{Comparison for Invariant Propagation}
    \label{fig:CompareProgation}
\end{figure}

In this section, we conduct an ablation study to evaluate the performance of the invariant propagation technique within \ToolName. In Table~\ref{tab:propagation}, we present the overall results, where we can clearly observe that the use of invariant propagation leads to a 5X-50X improvement in time efficiency.

More specifically, through the scatter plot in Figure~\ref{fig:CompareProgation}, we compared the time performance of individual examples before and after the application of invariant propagation techniques. In some cases, invariant propagation led to significant efficiency improvements (10X-1000X). This is due to the fact that for more complex programs, the size of the \LTS{} \(\Gamma\) is larger, and applying invariant propagation techniques on this basis can maximize performance optimization. Since the tool itself performs efficiently in most examples, the optimization brought by this technique is not apparent in those cases in the graph where the time is below 0.1 seconds. As the propagation itself, including the projection of sub-ATS, incurs a certain time cost, which dilutes the time optimization brought about by invariant propagation.

In conclusion, invariant propagation significantly enhances the tool's scalability and yields superior optimization results for complex examples. This also reveals that, within our constraint-solving methodology, the cost of computing invariants at any given location constitutes the principal computational bottleneck. By reducing the number of locations that need to be computed and leveraging prior results to avoid redundant polyhedral operations, we can effectively enhance efficiency.

\subsection{Caveat to Correctness}

This section elucidates configurations that may induce subtle deviations from real-world programs or alternative models during the empirical evaluation of our tool.

\begin{itemize}
    \item In our current experimental setup, we have not accounted for the behavior of machine integers during overflow conditions. Consequently, our verification process is confined to affine programs that do not encounter overflow errors.
    \item Within the context of control flow transformations, we introduce uncertainty into conditional statements by adhering to the SV-COMP guidelines. This is achieved by replacing branch conditions with functions that return random Boolean values, thereby emulating the semantics of non-deterministic branches. Nonetheless, we have yet to effectively model uncertainty in variable coefficients, specifically affine inequalities with coefficients represented as intervals.
\end{itemize}

\section{Related Works}
\label{sec:related_work}
Our methodology enhances conjunctive affine invariants by integrating optimizations from prior research~\cite{DBLP:conf/cav/ColonSS03,DBLP:conf/sas/SankaranarayananSM04,oopsla22/scalable,DBLP:conf/cav/JiFFC22} and utilizing control flow transformation techniques to extend them to disjunctive forms. A principal contribution of this paper is the mitigation of computational inefficiencies resulting from the exponential state space expansion associated with disjunctive extensions, achieved through invariant propagation. Consequently, this approach distinguishes our work from existing studies.
The work~\cite{DBLP:conf/pldi/GulwaniSV08} generates disjunctive invariants by predefining disjunctive templates, heuristically selecting physical cut points (while we select abstract locations from loop paths) and transforming the quadratic constraints from Farkas' Lemma into SAT solving.
Other approaches for conjunctive affine invariant generation include ~\cite{DBLP:conf/atva/OliveiraBP17,DBLP:conf/cav/GuptaR09}.
These approaches propose completely different techniques, and thus are orthogonal to our approach.

Polynomial invariant generation~\cite{DBLP:conf/dagstuhl/Kapur05,DBLP:journals/fcsc/YangZZX10,DBLP:conf/pldi/Chatterjee0GG20,DBLP:conf/lics/HrushovskiOP018,DBLP:conf/issac/Rodriguez-CarbonellK04,DBLP:conf/vmcai/Cousot05,DBLP:conf/sas/AdjeGM15,DBLP:journals/fcsc/LinWYZ14,DBLP:conf/cav/ChenHWZ15,DBLP:conf/atva/OliveiraBP16,DBLP:conf/issac/HumenbergerJK17,DBLP:conf/popl/SankaranarayananSM04} has been widely investigated.
Most of these approaches consider conjunctive polynomial invariants only. 
Compared with conjunctive polynomial invariants, disjunctive affine invariants capture the precise feature of phase and mode changes in affine loops, and therefore are more precise. 

The works~\cite{DBLP:conf/sigsoft/XieCLLL16,DBLP:conf/tase/LinZCSXLS21} are based on path dependency automata, requiring precise estimates of the number of iterations in loops, which limits their analysis to programs with regular alternation and inductive variables (computable general terms). The work~\cite{DBLP:conf/cav/SharmaDDA11} studies the detection of multiphase disjunctive invariants. Multiphase invariants are a special case of our control flow transformation since each phase in a multiphase loop cannot go back to previous phases, while in our control flow transformation, locations can go back and forth via transitions. Thus, we have a wider class of disjunctive invariants as compared with~\cite{DBLP:conf/cav/SharmaDDA11}. 

Our control flow transformation is related to control flow refinement~\cite{DBLP:conf/emsoft/BalakrishnanSIG09,DBLP:conf/pldi/GulwaniJK09,DBLP:journals/pacmpl/CyphertBKR19,DBLP:conf/cav/SilvermanK19} in the literature. 
These approaches mostly focus on representing the control flow of multiple loop iterations as regular expressions and refine these regular expressions by various approaches such as abstract domains, simulation relation and even invariant generation to reduce infeasible paths. 
Our control flow transformation considers the loop body within a single loop iteration, and is dedicated to the application of Farkas' Lemma. Thus, our control flow transformation has a different focus compared with these results. 
Moreover, the use of Farkas' Lemma can circumvent the issue that finer control flow may not always lead to finer analysis in control flow refinement~\cite{DBLP:journals/pacmpl/CyphertBKR19}. 

Our invariant propagation is related to abstract interpretation~\cite{DBLP:conf/popl/CousotH78,DBLP:conf/sas/BagnaraHRZ03,DBLP:conf/popl/SinghPV17,DBLP:conf/sas/GopanR07,DBLP:conf/vmcai/BoutonnetH19,DBLP:journals/entcs/HenryMM12}. The main difference is that it  propagates the \emph{already-computed} invariants (via Farkas' Lemma) to yet not computed locations as much as possible to minimize the invariant generation computation, while abstract interpretation usually requires an involved fixed-point iteration to \emph{compute} invariants.

Recurrence analysis~\cite{DBLP:conf/fmcad/FarzanK15,DBLP:conf/pldi/KincaidBBR17,DBLP:journals/pacmpl/KincaidCBR18} works well over programs with specific structure that ensures closed form solutions. 
For example, the most related recurrence analysis approach~\cite{oopsla23} (that also targets disjunctive invariants) solves the exact invariant over the class of loops with (ultimate) strict alternation between different modes.  
Compared with recurrence analysis, our approach does not require specific program structure to ensure closed form solution, but is less precise over programs that can be solved exactly by recurrence analysis.

Finally, we compare our approach with other methods such as machine learning, inference and data-driven approaches. 
Unlike constraint solving that can have an accuracy guarantee for the generated invariants based on the constraints, these methods cannot have an accuracy guarantee. 
Furthermore, machine learning and data-driven approaches themselves cannot guarantee that the generated assertions are indeed invariants. 
Moreover, our approach can generate invariants \emph{without} the need of a goal property, while several approaches (such as IC3~\cite{DBLP:conf/fmcad/SomenziB11}, CLN2INV~\cite{DBLP:conf/iclr/RyanWYGJ20}, \cite{FSE2022}) usually requires a goal property.
Note that the invariant generation without a given goal property is a classical setting (see e.g. ~\cite{DBLP:conf/cav/ColonSS03,DBLP:conf/popl/CousotH78}), and has applications in loop summary and probabilistic program verification (see e.g.~\cite{DBLP:conf/cav/ChakarovS13,DBLP:conf/pldi/WangS0CG21}). 

LLM-based invariant generation methods~\cite{AutoSpec} performs poorly on certain complex programs exhibiting disjunctive features. Those large-scale models have been unable to precisely comprehend the disjunctive properties inherent in these programs, and the invariants they produce often necessitate iterative interaction with Frama-C until an invariant that can be successfully verified by Frama-C is generated.
\section{Conclusion and Future Work}\label{sec:conclusion}

In this work, we propose a novel approach to generate affine disjunctive invariants over affine loops. Our novelty lies in combining a control flow transformation to extract the interleaving relationships between loop paths and employing Farkas' Lemma to solve the disjunctive invariants of loops. Additionally, we apply invariant propagation techniques to mitigate the computational costs of exponential explosion. A thorough resolution of the infeasible implication in the application of Farkas' Lemma and an extension to nested loops through loop summary are proposed as optimizations for practical program verification.
Experimental results show that our approach is competitive with state-of-the-art software verifiers in affine disjunctive invariant generation over affine loops. 
One future direction would be to consider extensions mentioned in Remark~\ref{rmk:extension}.

\begin{credits}
\subsubsection{\ackname} We thank anonymous reviewers for constructive comments. This work is partially supported by the National Natural Science Foundation of China (NSFC) under Grant No. 61872232 and No. 62172271.
\end{credits}

\clearpage
\bibliographystyle{splncs04}
\bibliography{invariants}

\begin{thebibliography}{10}
\providecommand{\url}[1]{\texttt{#1}}
\providecommand{\urlprefix}{URL }
\providecommand{\doi}[1]{https://doi.org/#1}

\bibitem{DBLP:conf/sas/AdjeGM15}
Adj{\'{e}}, A., Garoche, P., Magron, V.: Property-based polynomial invariant generation using sums-of-squares optimization. In: {SAS}. LNCS, vol.~9291, pp. 235--251. Springer, [S.l.] (2015)

\bibitem{DBLP:conf/cav/AlbarghouthiLGC12}
Albarghouthi, A., Li, Y., Gurfinkel, A., Chechik, M.: Ufo: {A} framework for abstraction- and interpolation-based software verification. In: {CAV}. LNCS, vol.~7358, pp. 672--678. Springer (2012). \doi{10.1007/978-3-642-31424-7\_48}, \url{https://doi.org/10.1007/978-3-642-31424-7\_48}

\bibitem{DBLP:conf/sas/AliasDFG10}
Alias, C., Darte, A., Feautrier, P., Gonnord, L.: Multi-dimensional rankings, program termination, and complexity bounds of flowchart programs. In: {SAS}. LNCS, vol.~6337, pp. 117--133. Springer (2010). \doi{10.1007/978-3-642-15769-1\_8}, \url{https://doi.org/10.1007/978-3-642-15769-1\_8}

\bibitem{DBLP:conf/pldi/AsadiC0GM21}
Asadi, A., Chatterjee, K., Fu, H., Goharshady, A.K., Mahdavi, M.: Polynomial reachability witnesses via stellens{\"{a}}tze. In: {PLDI}. pp. 772--787. {ACM} (2021). \doi{10.1145/3453483.3454076}, \url{https://doi.org/10.1145/3453483.3454076}

\bibitem{DBLP:conf/sas/BagnaraHRZ03}
Bagnara, R., Hill, P.M., Ricci, E., Zaffanella, E.: Precise widening operators for convex polyhedra. In: Cousot, R. (ed.) Static Analysis, 10th International Symposium, {SAS} 2003, San Diego, CA, USA, June 11-13, 2003, Proceedings. Lecture Notes in Computer Science, vol.~2694, pp. 337--354. Springer (2003). \doi{10.1007/3-540-44898-5\_19}, \url{https://doi.org/10.1007/3-540-44898-5\_19}

\bibitem{DBLP:conf/sas/BagnaraRZH02}
Bagnara, R., Ricci, E., Zaffanella, E., Hill, P.M.: Possibly not closed convex polyhedra and the parma polyhedra library. In: {SAS}. Lecture Notes in Computer Science, vol.~2477, pp. 213--229. Springer (2002). \doi{10.1007/3-540-45789-5\_17}, \url{https://doi.org/10.1007/3-540-45789-5\_17}

\bibitem{DBLP:conf/emsoft/BalakrishnanSIG09}
Balakrishnan, G., Sankaranarayanan, S., Ivancic, F., Gupta, A.: Refining the control structure of loops using static analysis. In: Chakraborty, S., Halbwachs, N. (eds.) Proceedings of the 9th {ACM} {\&} {IEEE} International conference on Embedded software, {EMSOFT} 2009, Grenoble, France, October 12-16, 2009. pp. 49--58. {ACM} (2009). \doi{10.1145/1629335.1629343}, \url{https://doi.org/10.1145/1629335.1629343}

\bibitem{PPLite_domain}
Becchi, A., Zaffanella, E.: Pplite: zero-overhead encoding of nnc polyhedra. Information and Computation  \textbf{275},  104620 (2020)

\bibitem{DBLP:conf/cav/Ben-AmramG17}
Ben{-}Amram, A.M., Genaim, S.: On multiphase-linear ranking functions. In: Majumdar, R., Kuncak, V. (eds.) {CAV}. LNCS, vol. 10427, pp. 601--620. Springer (2017). \doi{10.1007/978-3-319-63390-9\_32}, \url{https://doi.org/10.1007/978-3-319-63390-9\_32}

\bibitem{DBLP:conf/vmcai/BoutonnetH19}
Boutonnet, R., Halbwachs, N.: Disjunctive relational abstract interpretation for interprocedural program analysis. In: Enea, C., Piskac, R. (eds.) Verification, Model Checking, and Abstract Interpretation - 20th International Conference, {VMCAI} 2019, Cascais, Portugal, January 13-15, 2019, Proceedings. LNCS, vol. 11388, pp. 136--159. Springer (2019). \doi{10.1007/978-3-030-11245-5\_7}, \url{https://doi.org/10.1007/978-3-030-11245-5\_7}

\bibitem{DBLP:conf/cav/BradleyMS05}
Bradley, A.R., Manna, Z., Sipma, H.B.: Linear ranking with reachability. In: {CAV}. LNCS, vol.~3576, pp. 491--504. Springer (2005). \doi{10.1007/11513988\_48}, \url{https://doi.org/10.1007/11513988\_48}

\bibitem{IKOS}
Brat, G., Navas, J.A., Shi, N., Venet, A.: Ikos: A framework for static analysis based on abstract interpretation. In: Software Engineering and Formal Methods: 12th International Conference, SEFM 2014, Grenoble, France, September 1-5, 2014. Proceedings 12. pp. 271--277. Springer (2014)

\bibitem{DBLP:journals/jacm/CalcagnoDOY11}
Calcagno, C., Distefano, D., O'Hearn, P.W., Yang, H.: Compositional shape analysis by means of bi-abduction. J. {ACM}  \textbf{58}(6),  26:1--26:66 (2011). \doi{10.1145/2049697.2049700}, \url{https://doi.org/10.1145/2049697.2049700}

\bibitem{DBLP:conf/cav/ChakarovS13}
Chakarov, A., Sankaranarayanan, S.: Probabilistic program analysis with martingales. In: {CAV}. LNCS, vol.~8044, pp. 511--526. Springer (2013). \doi{10.1007/978-3-642-39799-8\_34}, \url{https://doi.org/10.1007/978-3-642-39799-8\_34}

\bibitem{DBLP:journals/toplas/ChatterjeeFG19}
Chatterjee, K., Fu, H., Goharshady, A.K.: Non-polynomial worst-case analysis of recursive programs. {ACM} Trans. Program. Lang. Syst.  \textbf{41}(4),  20:1--20:52 (2019). \doi{10.1145/3339984}, \url{https://doi.org/10.1145/3339984}

\bibitem{DBLP:conf/pldi/Chatterjee0GG20}
Chatterjee, K., Fu, H., Goharshady, A.K., Goharshady, E.K.: Polynomial invariant generation for non-deterministic recursive programs. In: {PLDI}. pp. 672--687. {ACM} (2020). \doi{10.1145/3385412.3385969}, \url{https://doi.org/10.1145/3385412.3385969}

\bibitem{DBLP:conf/ictac/ChenXYZZ07}
Chen, Y., Xia, B., Yang, L., Zhan, N., Zhou, C.: Discovering non-linear ranking functions by solving semi-algebraic systems. In: {ICTAC}. LNCS, vol.~4711, pp. 34--49. Springer (2007). \doi{10.1007/978-3-540-75292-9\_3}, \url{https://doi.org/10.1007/978-3-540-75292-9\_3}

\bibitem{DBLP:conf/cav/ChenHWZ15}
Chen, Y., Hong, C., Wang, B., Zhang, L.: Counterexample-guided polynomial loop invariant generation by lagrange interpolation. In: {CAV}. LNCS, vol.~9206, pp. 658--674. Springer (2015). \doi{10.1007/978-3-319-21690-4\_44}, \url{https://doi.org/10.1007/978-3-319-21690-4\_44}

\bibitem{ClangStaticAnalyzer}
Clang static analyzer: A source code analysis tool that finds bugs in c, c++, and objective-c programs. \url{https://clang-analyzer.llvm.org/} (2022)

\bibitem{DBLP:conf/cav/ColonSS03}
Col{\'{o}}n, M., Sankaranarayanan, S., Sipma, H.: Linear invariant generation using non-linear constraint solving. In: {CAV}. LNCS, vol.~2725, pp. 420--432. Springer (2003). \doi{10.1007/978-3-540-45069-6\_39}, \url{https://doi.org/10.1007/978-3-540-45069-6\_39}

\bibitem{DBLP:conf/tacas/ColonS01}
Col{\'{o}}n, M., Sipma, H.: Synthesis of linear ranking functions. In: {TACAS}. LNCS, vol.~2031, pp. 67--81. Springer (2001). \doi{10.1007/3-540-45319-9\_6}, \url{https://doi.org/10.1007/3-540-45319-9\_6}

\bibitem{DBLP:conf/vmcai/Cousot05}
Cousot, P.: Proving program invariance and termination by parametric abstraction, lagrangian relaxation and semidefinite programming. In: {VMCAI}. LNCS, vol.~3385, pp. 1--24. Springer (2005). \doi{10.1007/978-3-540-30579-8\_1}, \url{https://doi.org/10.1007/978-3-540-30579-8\_1}

\bibitem{DBLP:conf/popl/CousotC77}
Cousot, P., Cousot, R.: Abstract interpretation: {A} unified lattice model for static analysis of programs by construction or approximation of fixpoints. In: {POPL}. pp. 238--252. {ACM} (1977). \doi{10.1145/512950.512973}, \url{https://doi.org/10.1145/512950.512973}

\bibitem{ASTree}
Cousot, P., Cousot, R., Feret, J., Mauborgne, L., Min{\'e}, A., Monniaux, D., Rival, X.: The astr{\'e}e analyzer. In: Programming Languages and Systems: 14th European Symposium on Programming, ESOP 2005, Held as Part of the Joint European Conferences on Theory and Practice of Software, ETAPS 2005, Edinburgh, UK, April 4-8, 2005. Proceedings 14. pp. 21--30. Springer (2005)

\bibitem{DBLP:conf/popl/CousotH78}
Cousot, P., Halbwachs, N.: Automatic discovery of linear restraints among variables of a program. In: {POPL}. pp. 84--96. {ACM} Press (1978). \doi{10.1145/512760.512770}, \url{https://doi.org/10.1145/512760.512770}

\bibitem{CPAchecker}
Cpachecker: The configurable software-verification platform. \url{https://cpachecker.sosy-lab.org} (2022)

\bibitem{DBLP:conf/icse/CsallnerTS08}
Csallner, C., Tillmann, N., Smaragdakis, Y.: Dysy: dynamic symbolic execution for invariant inference. In: {ICSE}. pp. 281--290. {ACM} (2008). \doi{10.1145/1368088.1368127}, \url{https://doi.org/10.1145/1368088.1368127}

\bibitem{DBLP:journals/pacmpl/CyphertBKR19}
Cyphert, J., Breck, J., Kincaid, Z., Reps, T.W.: Refinement of path expressions for static analysis. Proc. {ACM} Program. Lang.  \textbf{3}({POPL}),  45:1--45:29 (2019). \doi{10.1145/3290358}, \url{https://doi.org/10.1145/3290358}

\bibitem{SVCOMP2023Veriabs}
Darke, P., Agrawal, S., Venkatesh, R.: Veriabs: A tool for scalable verification by abstraction (competition contribution). In: Tools and Algorithms for the Construction and Analysis of Systems: 27th International Conference, TACAS 2021, Held as Part of the European Joint Conferences on Theory and Practice of Software, ETAPS 2021, Luxembourg City, Luxembourg, March 27--April 1, 2021, Proceedings, Part II 27. pp. 458--462. Springer (2021)

\bibitem{DBLP:conf/fm/DavidKKL16}
David, C., Kesseli, P., Kroening, D., Lewis, M.: Danger invariants. In: {FM}. LNCS, vol.~9995, pp. 182--198 (2016). \doi{10.1007/978-3-319-48989-6\_12}, \url{https://doi.org/10.1007/978-3-319-48989-6\_12}

\bibitem{DBLP:conf/oopsla/DilligDLM13}
Dillig, I., Dillig, T., Li, B., McMillan, K.L.: Inductive invariant generation via abductive inference. In: {OOPSLA}. pp. 443--456. {ACM} (2013). \doi{10.1145/2509136.2509511}, \url{https://doi.org/10.1145/2509136.2509511}

\bibitem{DBLP:conf/sas/DonaldsonHKR11}
Donaldson, A.F., Haller, L., Kroening, D., R{\"{u}}mmer, P.: Software verification using k-induction. In: Yahav, E. (ed.) {SAS}. LNCS, vol.~6887, pp. 351--368. Springer (2011). \doi{10.1007/978-3-642-23702-7\_26}, \url{https://doi.org/10.1007/978-3-642-23702-7\_26}

\bibitem{FarkasLemma}
Farkas, J.: A fourier-f\'{e}le mechanikai elv alkalmaz\'{a}sai ({H}ungarian). Mathematikai\'{e}s Term\'{e}szettudom\'{a}nyi \'{E}rtesit\"{o}  \textbf{12},  457--472 (1894)

\bibitem{DBLP:conf/fmcad/FarzanK15}
Farzan, A., Kincaid, Z.: Compositional recurrence analysis. In: {FMCAD}. pp. 57--64. {IEEE} (2015)

\bibitem{DBLP:conf/cav/GanX0ZD20}
Gan, T., Xia, B., Xue, B., Zhan, N., Dai, L.: Nonlinear craig interpolant generation. In: {CAV}. LNCS, vol. 12224, pp. 415--438. Springer (2020). \doi{10.1007/978-3-030-53288-8\_20}, \url{https://doi.org/10.1007/978-3-030-53288-8\_20}

\bibitem{DBLP:conf/cav/0001LMN14}
Garg, P., L{\"{o}}ding, C., Madhusudan, P., Neider, D.: {ICE:} {A} robust framework for learning invariants. In: {CAV}. LNCS, vol.~8559, pp. 69--87. Springer (2014). \doi{10.1007/978-3-319-08867-9\_5}, \url{https://doi.org/10.1007/978-3-319-08867-9\_5}

\bibitem{DBLP:conf/popl/0001NMR16}
Garg, P., Neider, D., Madhusudan, P., Roth, D.: Learning invariants using decision trees and implication counterexamples. In: {POPL}. pp. 499--512. {ACM} (2016). \doi{10.1145/2837614.2837664}, \url{https://doi.org/10.1145/2837614.2837664}

\bibitem{DBLP:conf/sas/GopanR07}
Gopan, D., Reps, T.W.: Guided static analysis. In: Nielson, H.R., Fil{\'{e}}, G. (eds.) Static Analysis, 14th International Symposium, {SAS} 2007, Kongens Lyngby, Denmark, August 22-24, 2007, Proceedings. LNCS, vol.~4634, pp. 349--365. Springer (2007). \doi{10.1007/978-3-540-74061-2\_22}, \url{https://doi.org/10.1007/978-3-540-74061-2\_22}

\bibitem{DBLP:conf/pldi/GulwaniJK09}
Gulwani, S., Jain, S., Koskinen, E.: Control-flow refinement and progress invariants for bound analysis. In: Hind, M., Diwan, A. (eds.) Proceedings of the 2009 {ACM} {SIGPLAN} Conference on Programming Language Design and Implementation, {PLDI} 2009, Dublin, Ireland, June 15-21, 2009. pp. 375--385. {ACM} (2009). \doi{10.1145/1542476.1542518}, \url{https://doi.org/10.1145/1542476.1542518}

\bibitem{DBLP:conf/pldi/GulwaniSV08}
Gulwani, S., Srivastava, S., Venkatesan, R.: Program analysis as constraint solving. In: {PLDI}. pp. 281--292. {ACM} (2008). \doi{10.1145/1375581.1375616}, \url{https://doi.org/10.1145/1375581.1375616}

\bibitem{DBLP:conf/cav/GuptaR09}
Gupta, A., Rybalchenko, A.: Invgen: An efficient invariant generator. In: {CAV}. LNCS, vol.~5643, pp. 634--640. Springer (2009). \doi{10.1007/978-3-642-02658-4\_48}, \url{https://doi.org/10.1007/978-3-642-02658-4\_48}

\bibitem{DBLP:conf/pldi/HeSPV20}
He, J., Singh, G., P{\"{u}}schel, M., Vechev, M.T.: Learning fast and precise numerical analysis. In: {PLDI}. pp. 1112--1127. {ACM} (2020). \doi{10.1145/3385412.3386016}, \url{https://doi.org/10.1145/3385412.3386016}

\bibitem{DBLP:journals/entcs/HenryMM12}
Henry, J., Monniaux, D., Moy, M.: {PAGAI:} {A} path sensitive static analyser. Electron. Notes Theor. Comput. Sci.  \textbf{289},  15--25 (2012). \doi{10.1016/j.entcs.2012.11.003}, \url{https://doi.org/10.1016/j.entcs.2012.11.003}

\bibitem{DBLP:conf/lics/HrushovskiOP018}
Hrushovski, E., Ouaknine, J., Pouly, A., Worrell, J.: Polynomial invariants for affine programs. In: {LICS}. pp. 530--539. {ACM} (2018). \doi{10.1145/3209108.3209142}, \url{https://doi.org/10.1145/3209108.3209142}

\bibitem{DBLP:conf/issac/HumenbergerJK17}
Humenberger, A., Jaroschek, M., Kov{\'{a}}cs, L.: Automated generation of non-linear loop invariants utilizing hypergeometric sequences. In: {ISSAC}. pp. 221--228. {ACM} (2017). \doi{10.1145/3087604.3087623}, \url{https://doi.org/10.1145/3087604.3087623}

\bibitem{DBLP:conf/cav/JiFFC22}
Ji, Y., Fu, H., Fang, B., Chen, H.: Affine loop invariant generation via matrix algebra. In: Shoham, S., Vizel, Y. (eds.) Computer Aided Verification - 34th International Conference, {CAV} 2022, Haifa, Israel, August 7-10, 2022, Proceedings, Part {I}. Lecture Notes in Computer Science, vol. 13371, pp. 257--281. Springer (2022). \doi{10.1007/978-3-031-13185-1\_13}, \url{https://doi.org/10.1007/978-3-031-13185-1\_13}

\bibitem{DBLP:journals/pacmpl/KSG22}
K., H.G.V., Shoham, S., Gurfinkel, A.: Solving constrained horn clauses modulo algebraic data types and recursive functions. Proc. {ACM} Program. Lang.  \textbf{6}({POPL}),  1--29 (2022). \doi{10.1145/3498722}, \url{https://doi.org/10.1145/3498722}

\bibitem{DBLP:conf/dagstuhl/Kapur05}
Kapur, D.: Automatically generating loop invariants using quantifier elimination. In: Deduction and Applications. Dagstuhl Seminar Proceedings, vol. 05431. Internationales Begegnungs- und Forschungszentrum f{\"{u}}r Informatik (IBFI), Schloss Dagstuhl, Germany (2005), \url{http://drops.dagstuhl.de/opus/volltexte/2006/511}

\bibitem{DBLP:conf/pldi/KincaidBBR17}
Kincaid, Z., Breck, J., Boroujeni, A.F., Reps, T.W.: Compositional recurrence analysis revisited. In: {PLDI}. pp. 248--262. {ACM} (2017). \doi{10.1145/3062341.3062373}, \url{https://doi.org/10.1145/3062341.3062373}

\bibitem{DBLP:journals/pacmpl/KincaidCBR18}
Kincaid, Z., Cyphert, J., Breck, J., Reps, T.W.: Non-linear reasoning for invariant synthesis. Proc. {ACM} Program. Lang.  \textbf{2}({POPL}),  54:1--54:33 (2018). \doi{10.1145/3158142}, \url{https://doi.org/10.1145/3158142}

\bibitem{DBLP:conf/vmcai/LarrazRR13}
Larraz, D., Rodr{\'{\i}}guez{-}Carbonell, E., Rubio, A.: Smt-based array invariant generation. In: Giacobazzi, R., Berdine, J., Mastroeni, I. (eds.) Verification, Model Checking, and Abstract Interpretation, 14th International Conference, {VMCAI} 2013, Rome, Italy, January 20-22, 2013. Proceedings. Lecture Notes in Computer Science, vol.~7737, pp. 169--188. Springer (2013). \doi{10.1007/978-3-642-35873-9\_12}, \url{https://doi.org/10.1007/978-3-642-35873-9\_12}

\bibitem{DBLP:conf/pldi/LeZN19}
Le, T.C., Zheng, G., Nguyen, T.: {SLING:} using dynamic analysis to infer program invariants in separation logic. In: McKinley, K.S., Fisher, K. (eds.) Proceedings of the 40th {ACM} {SIGPLAN} Conference on Programming Language Design and Implementation, {PLDI} 2019, Phoenix, AZ, USA, June 22-26, 2019. pp. 788--801. {ACM} (2019). \doi{10.1145/3314221.3314634}, \url{https://doi.org/10.1145/3314221.3314634}

\bibitem{DBLP:journals/fcsc/LinWYZ14}
Lin, W., Wu, M., Yang, Z., Zeng, Z.: Proving total correctness and generating preconditions for loop programs via symbolic-numeric computation methods. Frontiers Comput. Sci.  \textbf{8}(2),  192--202 (2014)

\bibitem{DBLP:conf/tase/LinZCSXLS21}
Lin, Y., Zhang, Y., Chen, S., Song, F., Xie, X., Li, X., Sun, L.: Inferring loop invariants for multi-path loops. In: International Symposium on Theoretical Aspects of Software Engineering, {TASE} 2021, Shanghai, China, August 25-27, 2021. pp. 63--70. {IEEE} (2021). \doi{10.1109/TASE52547.2021.00030}, \url{https://doi.org/10.1109/TASE52547.2021.00030}

\bibitem{oopsla22/scalable}
Liu, H., Fu, H., Yu, Z., Song, J., Li, G.: Scalable linear invariant generation with {F}arkas' lemma. Proc. ACM Program. Lang.  \textbf{6}(OOPSLA2) (oct 2022). \doi{10.1145/3563295}, \url{https://doi.org/10.1145/3563295}

\bibitem{DBLP:books/daglib/0080029}
Manna, Z., Pnueli, A.: Temporal verification of reactive systems - safety. Springer (1995)

\bibitem{DBLP:conf/tacas/McMillan08}
McMillan, K.L.: Quantified invariant generation using an interpolating saturation prover. In: Ramakrishnan, C.R., Rehof, J. (eds.) {TACAS}. LNCS, vol.~4963, pp. 413--427. Springer (2008). \doi{10.1007/978-3-540-78800-3\_31}, \url{https://doi.org/10.1007/978-3-540-78800-3\_31}

\bibitem{DBLP:conf/esop/Mine04}
Min{\'{e}}, A.: Relational abstract domains for the detection of floating-point run-time errors. In: Schmidt, D.A. (ed.) Programming Languages and Systems, 13th European Symposium on Programming, {ESOP} 2004, Held as Part of the Joint European Conferences on Theory and Practice of Software, {ETAPS} 2004, Barcelona, Spain, March 29 - April 2, 2004, Proceedings. Lecture Notes in Computer Science, vol.~2986, pp. 3--17. Springer (2004). \doi{10.1007/978-3-540-24725-8\_2}, \url{https://doi.org/10.1007/978-3-540-24725-8\_2}

\bibitem{DBLP:conf/vmcai/Mine06}
Min{\'{e}}, A.: Symbolic methods to enhance the precision of numerical abstract domains. In: Emerson, E.A., Namjoshi, K.S. (eds.) Verification, Model Checking, and Abstract Interpretation, 7th International Conference, {VMCAI} 2006, Charleston, SC, USA, January 8-10, 2006, Proceedings. Lecture Notes in Computer Science, vol.~3855, pp. 348--363. Springer (2006). \doi{10.1007/11609773\_23}, \url{https://doi.org/10.1007/11609773\_23}

\bibitem{DBLP:conf/icse/NguyenKWF12}
Nguyen, T., Kapur, D., Weimer, W., Forrest, S.: Using dynamic analysis to discover polynomial and array invariants. In: {ICSE}. pp. 683--693. {IEEE} Computer Society (2012). \doi{10.1109/ICSE.2012.6227149}, \url{https://doi.org/10.1109/ICSE.2012.6227149}

\bibitem{ICSE2022dig}
Nguyen, T., Nguyen, K., Duong, H.: Syminfer: inferring numerical invariants using symbolic states. In: Proceedings of the ACM/IEEE 44th International Conference on Software Engineering: Companion Proceedings. pp. 197--201 (2022)

\bibitem{DBLP:conf/atva/OliveiraBP16}
de~Oliveira, S., Bensalem, S., Prevosto, V.: Polynomial invariants by linear algebra. In: {ATVA}. LNCS, vol.~9938, pp. 479--494 (2016). \doi{10.1007/978-3-319-46520-3\_30}, \url{https://doi.org/10.1007/978-3-319-46520-3\_30}

\bibitem{DBLP:conf/atva/OliveiraBP17}
de~Oliveira, S., Bensalem, S., Prevosto, V.: Synthesizing invariants by solving solvable loops. In: {ATVA}. LNCS, vol. 10482, pp. 327--343. Springer (2017). \doi{10.1007/978-3-319-68167-2\_22}, \url{https://doi.org/10.1007/978-3-319-68167-2\_22}

\bibitem{DBLP:conf/pldi/PadonMPSS16}
Padon, O., McMillan, K.L., Panda, A., Sagiv, M., Shoham, S.: Ivy: safety verification by interactive generalization. In: {PLDI}. pp. 614--630. {ACM} (2016). \doi{10.1145/2908080.2908118}, \url{https://doi.org/10.1145/2908080.2908118}

\bibitem{DBLP:conf/vmcai/PodelskiR04}
Podelski, A., Rybalchenko, A.: A complete method for the synthesis of linear ranking functions. In: {VMCAI}. LNCS, vol.~2937, pp. 239--251. Springer (2004). \doi{10.1007/978-3-540-24622-0\_20}, \url{https://doi.org/10.1007/978-3-540-24622-0\_20}

\bibitem{FSE2022}
Riley, D., Fedyukovich, G.: Multi-phase invariant synthesis. In: Proceedings of the 30th ACM Joint European Software Engineering Conference and Symposium on the Foundations of Software Engineering. pp. 607--619 (2022)

\bibitem{DBLP:conf/sas/Rodriguez-CarbonellK04}
Rodr{\'{\i}}guez{-}Carbonell, E., Kapur, D.: An abstract interpretation approach for automatic generation of polynomial invariants. In: {SAS}. LNCS, vol.~3148, pp. 280--295. Springer (2004). \doi{10.1007/978-3-540-27864-1\_21}, \url{https://doi.org/10.1007/978-3-540-27864-1\_21}

\bibitem{DBLP:conf/issac/Rodriguez-CarbonellK04}
Rodr{\'{\i}}guez{-}Carbonell, E., Kapur, D.: {A}utomatic {G}eneration of {P}olynomial {L}oop {I}nvariants: {A}lgebraic {F}oundations. In: {ISSAC}. pp. 266--273. {ACM} (2004). \doi{10.1145/1005285.1005324}, \url{https://doi.org/10.1145/1005285.1005324}

\bibitem{DBLP:conf/iclr/RyanWYGJ20}
Ryan, G., Wong, J., Yao, J., Gu, R., Jana, S.: {CLN2INV:} learning loop invariants with continuous logic networks. In: 8th International Conference on Learning Representations, {ICLR} 2020, Addis Ababa, Ethiopia, April 26-30, 2020. OpenReview.net (2020), \url{https://openreview.net/forum?id=HJlfuTEtvB}

\bibitem{DBLP:conf/popl/SankaranarayananSM04}
Sankaranarayanan, S., Sipma, H., Manna, Z.: Non-linear loop invariant generation using gr{\"{o}}bner bases. In: {POPL}. pp. 318--329. {ACM} (2004). \doi{10.1145/964001.964028}, \url{https://doi.org/10.1145/964001.964028}

\bibitem{DBLP:conf/sas/SankaranarayananSM04}
Sankaranarayanan, S., Sipma, H.B., Manna, Z.: Constraint-based linear-relations analysis. In: {SAS}. LNCS, vol.~3148, pp. 53--68. Springer (2004). \doi{10.1007/978-3-540-27864-1\_7}, \url{https://doi.org/10.1007/978-3-540-27864-1\_7}

\bibitem{DBLP:books/daglib/0090562}
Schrijver, A.: Theory of linear and integer programming. Wiley-Interscience series in discrete mathematics and optimization, Wiley (1999)

\bibitem{DBLP:journals/fmsd/SharmaA16}
Sharma, R., Aiken, A.: From invariant checking to invariant inference using randomized search. Formal Methods Syst. Des.  \textbf{48}(3),  235--256 (2016). \doi{10.1007/s10703-016-0248-5}, \url{https://doi.org/10.1007/s10703-016-0248-5}

\bibitem{DBLP:conf/cav/SharmaDDA11}
Sharma, R., Dillig, I., Dillig, T., Aiken, A.: Simplifying loop invariant generation using splitter predicates. In: Gopalakrishnan, G., Qadeer, S. (eds.) Computer Aided Verification - 23rd International Conference, {CAV} 2011, Snowbird, UT, USA, July 14-20, 2011. Proceedings. Lecture Notes in Computer Science, vol.~6806, pp. 703--719. Springer (2011). \doi{10.1007/978-3-642-22110-1\_57}, \url{https://doi.org/10.1007/978-3-642-22110-1\_57}

\bibitem{DBLP:conf/esop/0001GHALN13}
Sharma, R., Gupta, S., Hariharan, B., Aiken, A., Liang, P., Nori, A.V.: A data driven approach for algebraic loop invariants. In: {ESOP}. LNCS, vol.~7792, pp. 574--592. Springer (2013). \doi{10.1007/978-3-642-37036-6\_31}, \url{https://doi.org/10.1007/978-3-642-37036-6\_31}

\bibitem{DIG_CIVL}
Siegel, S.F., Zheng, M., Luo, Z., Zirkel, T.K., Marianiello, A.V., Edenhofner, J.G., Dwyer, M.B., Rogers, M.S.: Civl: the concurrency intermediate verification language. In: Proceedings of the International Conference for High Performance Computing, Networking, Storage and Analysis. pp. 1--12 (2015)

\bibitem{DBLP:conf/cav/SilvermanK19}
Silverman, J., Kincaid, Z.: Loop summarization with rational vector addition systems. In: Dillig, I., Tasiran, S. (eds.) Computer Aided Verification - 31st International Conference, {CAV} 2019, New York City, NY, USA, July 15-18, 2019, Proceedings, Part {II}. Lecture Notes in Computer Science, vol. 11562, pp. 97--115. Springer (2019). \doi{10.1007/978-3-030-25543-5\_7}, \url{https://doi.org/10.1007/978-3-030-25543-5\_7}

\bibitem{DBLP:conf/popl/SinghPV17}
Singh, G., P{\"{u}}schel, M., Vechev, M.T.: Fast polyhedra abstract domain. In: Castagna, G., Gordon, A.D. (eds.) Proceedings of the 44th {ACM} {SIGPLAN} Symposium on Principles of Programming Languages, {POPL} 2017, Paris, France, January 18-20, 2017. pp. 46--59. {ACM} (2017)

\bibitem{DBLP:conf/fmcad/SomenziB11}
Somenzi, F., Bradley, A.R.: {IC3:} where monolithic and incremental meet. In: Bjesse, P., Slobodov{\'{a}}, A. (eds.) International Conference on Formal Methods in Computer-Aided Design, {FMCAD} '11, Austin, TX, USA, October 30 - November 02, 2011. pp.~3--8. {FMCAD} Inc. (2011), \url{http://dl.acm.org/citation.cfm?id=2157657}

\bibitem{DBLP:conf/pldi/SrivastavaG09}
Srivastava, S., Gulwani, S.: Program verification using templates over predicate abstraction. In: Hind, M., Diwan, A. (eds.) Proceedings of the 2009 {ACM} {SIGPLAN} Conference on Programming Language Design and Implementation, {PLDI} 2009, Dublin, Ireland, June 15-21, 2009. pp. 223--234. {ACM} (2009). \doi{10.1145/1542476.1542501}, \url{https://doi.org/10.1145/1542476.1542501}

\bibitem{Sting}
Sting: Stanford invariant generator. \url{http://theory.stanford.edu/~srirams/Software/sting.html} (2006)

\bibitem{svcomp}
Software verification competition. \url{https://sv-comp.sosy-lab.org} (2023)

\bibitem{tarjan1972depth}
Tarjan, R.: Depth-first search and linear graph algorithms. SIAM journal on computing  \textbf{1}(2),  146--160 (1972)

\bibitem{oopsla23}
Wang, C., Lin, F.: Solving conditional linear recurrences for program verification: The periodic case. In: {OOPSLA}. ACM (2023), to appear

\bibitem{DBLP:conf/pldi/WangS0CG21}
Wang, J., Sun, Y., Fu, H., Chatterjee, K., Goharshady, A.K.: Quantitative analysis of assertion violations in probabilistic programs. In: {PLDI}. pp. 1171--1186. {ACM} (2021). \doi{10.1145/3453483.3454102}, \url{https://doi.org/10.1145/3453483.3454102}

\bibitem{AutoSpec}
Wen, C., Cao, J., Su, J., Xu, Z., Qin, S., He, M., Li, H., Cheung, S.C., Tian, C.: Enchanting program specification synthesis by large language models using static analysis and program verification. In: International Conference on Computer Aided Verification. pp. 302--328. Springer (2024)

\bibitem{DBLP:conf/sigsoft/XieCLLL16}
Xie, X., Chen, B., Liu, Y., Le, W., Li, X.: Proteus: computing disjunctive loop summary via path dependency analysis. In: Zimmermann, T., Cleland{-}Huang, J., Su, Z. (eds.) Proceedings of the 24th {ACM} {SIGSOFT} International Symposium on Foundations of Software Engineering, {FSE} 2016, Seattle, WA, USA, November 13-18, 2016. pp. 61--72. {ACM} (2016). \doi{10.1145/2950290.2950340}, \url{https://doi.org/10.1145/2950290.2950340}

\bibitem{DBLP:conf/sigsoft/Xu0W20}
Xu, R., He, F., Wang, B.: Interval counterexamples for loop invariant learning. In: {ESEC/FSE}. pp. 111--122. {ACM} (2020). \doi{10.1145/3368089.3409752}, \url{https://doi.org/10.1145/3368089.3409752}

\bibitem{DBLP:journals/fcsc/YangZZX10}
Yang, L., Zhou, C., Zhan, N., Xia, B.: Recent advances in program verification through computer algebra. Frontiers Comput. Sci. China  \textbf{4}(1),  1--16 (2010). \doi{10.1007/s11704-009-0074-7}, \url{https://doi.org/10.1007/s11704-009-0074-7}

\bibitem{DBLP:conf/pldi/YaoRWJG20}
Yao, J., Ryan, G., Wong, J., Jana, S., Gu, R.: Learning nonlinear loop invariants with gated continuous logic networks. In: {PLDI}. pp. 106--120. {ACM} (2020). \doi{10.1145/3385412.3385986}, \url{https://doi.org/10.1145/3385412.3385986}

\bibitem{z3}
Z3. \url{https://github.com/Z3Prover/z3} (2023)

\end{thebibliography}

\clearpage
\appendix

\lstset{language=program}
\lstset{tabsize=3}
\newsavebox{\unnestedP}
\begin{lrbox}{\unnestedP}
\begin{lstlisting}[mathescape]
switch {
  case $\phi_{P,1}$: $\mathbf{x}:=F_{P,1}(\mathbf{x})$;$\delta_{P,1}$;
  $\cdots$
  case $\phi_{P,p}$: $\mathbf{x}:=F_{P,p}(\mathbf{x})$;$\delta_{P,p}$;
}
\end{lstlisting}
\end{lrbox}

\lstset{language=program}
\lstset{tabsize=3}
\newsavebox{\unnestedQ}
\begin{lrbox}{\unnestedQ}
\begin{lstlisting}[mathescape]
switch {
  case $\phi_{Q,1}$: $\mathbf{x}:=\mathbf{F}_{Q,1}(\mathbf{x})$;$\delta_{Q,1}$;
  $\cdots$
  case $\phi_{Q,q}$: $\mathbf{x}:=\mathbf{F}_{Q,q}(\mathbf{x})$;$\delta_{Q,q}$;
}
\end{lstlisting}
\end{lrbox}

\lstset{language=program}
\lstset{tabsize=3}
\newsavebox{\unnestedRsequential}
\begin{lrbox}{\unnestedRsequential}
\begin{lstlisting}[mathescape]
switch {
  $\cdots$
  case $\phi_{P,i}$: 
    $\mathbf{x}:=(\mathbf{F}_{P,i}(\mathbf{x}))$;
    $break$;($\mbox{if }\delta_{P,i}=\textbf{break}$)
  $\cdots$
  case $\phi_{P,i}\wedge \phi_{Q,j}[\mathbf{F}_{P,i}(\mathbf{x})/\mathbf{x}]$: 
    $\mathbf{x}:=\mathbf{F}_{Q,j} (\mathbf{F}_{P,i}(\mathbf{x}))$; 
    $\delta_{Q,j}$;($\mbox{if }\delta_{P,i}=\textbf{skip}$)
  $\cdots$
}
\end{lstlisting}
\end{lrbox}

\lstset{language=program}
\lstset{tabsize=3}
\newsavebox{\unnestedRconditional}
\begin{lrbox}{\unnestedRconditional}
\begin{lstlisting}[mathescape]
switch {
  $\cdots$
  case $\phi_{P,i}\wedge b$: 
    $\mathbf{x}:=\mathbf{F}_{P,i}(\mathbf{x})$;$\delta_{P,i}$;
  $\cdots$
  case $\phi_{Q,j}\wedge \neg b$: 
    $\mathbf{x}:=\mathbf{F}_{Q,j}(\mathbf{x})$;$\delta_{Q,j}$;
  $\cdots$
}
\end{lstlisting}
\end{lrbox}
\section{Process of transformation to canonical form}
\label{appendix:transform}
Here, we provide a detailed demonstration of how to transform the loop body $P$ of a non-nested affine program into its canonical form:

\begin{figure}[h]
\begin{minipage}[b]{\linewidth} 
\centering
  \begin{minipage}[b]{0.3\linewidth}
    \begin{minipage}[b]{\linewidth}
      \centering
      \resizebox{\linewidth}{!}{
      \usebox{\unnestedP}
      }
    \subcaption{$\mathsf{C}_P$}
    \label{fig:unnestedP}
    \end{minipage}
    \begin{minipage}[b]{\linewidth}
        \centering
        \resizebox{\linewidth}{!}{
        \usebox{\unnestedQ}
        }
    \subcaption{$\mathsf{C}_Q$}
    \label{fig:unnestedQ}
    \end{minipage}
  \end{minipage}
  \hfill 
  \begin{minipage}[b]{0.3\linewidth}
      \centering
      \resizebox{\linewidth}{!}{
      \usebox{\unnestedRsequential}
      }
  \subcaption{The sequential case}
  \label{fig:unnestedRsequential}
  \end{minipage}
  \hfill 
  \begin{minipage}[b]{0.3\linewidth}
      \centering
      \resizebox{\linewidth}{!}{
      \usebox{\unnestedRconditional}
      }
  \subcaption{The conditional case}
  \label{fig:unnestedRconditional}
  \end{minipage}
\caption{The canonical form of transformation (TF) for $P$, $Q$}
\label{fig:Morecase}
\end{minipage}
\end{figure}
\begin{itemize}
\item For the base case where the program $P$ is either a single affine assignment $\mathbf{x}:=\mathbf{F}(\mathbf{x})$ or resp. the \textbf{break} statement, the transformed program $\mathsf{C}_P$ is simply $\textbf{switch}~\{\textbf{case}~\mathbf{true}: \mathbf{x}:=\mathbf{F}(\mathbf{x}); 
\textbf{skip};\}$ or resp. $\textbf{switch}~\{\textbf{case}~\mathbf{true}: \mathbf{x}:=\mathbf{x}; \textbf{break};\}$, respectively.
\item For a sequential composition $R=P;Q$, the algorithm recursively computes $\mathsf{C}_P$ and $\mathsf{C}_Q$ as in Figure~\ref{fig:unnestedP} and Figure~\ref{fig:unnestedQ} respectively, and then compute $\mathsf{C}_R$ as in Figure~\ref{fig:unnestedRsequential} for which: 
\begin{itemize}
\item For each $1\le i\le p$ such that $\delta_{P,i}=\textbf{break}$, we have the branch $\mathbf{x}:=\mathbf{F}_{P,i}(\mathbf{x});\textbf{break};$ (i.e., the branch already breaks in the execution of $P$). 
\item For each $1\le i\le p$ and $1\le j\le q$ such that $\delta_{P,i}=\textbf{skip}$, we have the branch $\mathbf{x}:=\mathbf{F}_{Q,j} (\mathbf{F}_{P,i}(\mathbf{x})); \delta_{Q,j}$ under the branch condition $\phi_{P,i}\wedge (\phi_{Q,j}[\mathbf{F}_{P,i}(\mathbf{x})/\mathbf{x}])$ (i.e., the branch continues to the execution of $Q$). 
\end{itemize}
\item For a conditional branch $R=\textbf{if }b\textbf{ then }P\textbf{ else }Q$, the algorithm recursively computes $\mathsf{C}_P$ and $\mathsf{C}_Q$ as in the previous case, and then compute $\mathsf{C}_R$ as in Figure~\ref{fig:unnestedRconditional} for which: 
\begin{itemize}
\item For each $1\le i\le p$, we have the branch $\mathbf{x}=\mathbf{F}_{P,i}(\mathbf{x});\delta_{P,i};$ with branch condition $\phi_{P,i}\wedge b$ (i.e., the branch conditions of $P$ is conjuncted with the extra condition $b$). 
\item  For each $1\le j\le q$, we have the branch $\mathbf{x}=\mathbf{F}_{Q,j}(\mathbf{x});\delta_{Q,j};$ with branch condition $\phi_{Q,j}\wedge \neg b$ (i.e., the branch conditions of $Q$ is conjuncted with the extra condition $\neg b$).  
\end{itemize}
\end{itemize}

\section{Proof of Theorem~\ref{thr:infeasible}}
\label{app:thm_infeasible}
\par 
\textbf{Theorem}~\ref{thr:infeasible}. \textit{Let \(\Gamma\) be an ATS. For any AAM \(\eta\) that fulfills the initial and consecution conditions derived from the ATS \(\Gamma\) with the original constraints for the infeasible implication as in each consecution tabular of Figure~\ref{tab:farkascons} (aimed at \(-1>=0\)) with each $\mu$ in an infeasible implication instantiated as $k$ for some $k>0$, it is equivalent to set all $\mu$'s to $1$ while preserving the constraints of infeasible implication.}

To prove the theorem, for convenience, we denote the consecution tabular with $-1\geq 0$ as constraint consecution tabular and the consecution tabular without $-1\geq 0$ as transition consecution tabular. Then we prove that constraint consecution tabular is equivalent whether $\mu=1$ or $\mu=k,\forall k>0$.
\begin{figure}[ht]
    \centering
    \begin{minipage}[b]{0.6\linewidth}
    \centering
        \setlength{\arraycolsep}{0mm}{\small
        $$
        \begin{array}{c|rcccrcrcccrcrcc}
            k        &c_{\tsLoc, 1} x_1     &+  &\cdots &+ &c_{\tsLoc, n} x_n    &    &     &     &     &     &      &+  &d_{\tsLoc}  &\geq  &0  \\
        
            \lambda_{0}  &       &    &       &    &       &    &     &     &     &     &      &     &1  &\geq     &0      \\
            
            \lambda_{1}  &a_{11} x_1       &+ &\cdots &+ &a_{1n} x_n       &+                                     
                             &a_{11}' x_{1}'   &+ &\cdots &+ &a_{1n}' x_{n}'   &+  &b_1&\geq     &0      \\ 
            
            \vdots         &\vdots                &    &         &    &\vdots                &     
                             &\vdots                &    &         &    &\vdots                &     &\vdots         &   &      \\ 
            
            \lambda_{m}  &a_{m1} x_1       &+ &\cdots &+ &a_{mn} x_n       &+  
                             &a_{m1}' x_{1}'   &+ &\cdots &+ &a_{mn}' x_{n}'   &+  &b_m&\geq     &0      \\ 
            \hline
                             
                             &       &    &       &    &       &    &     &     &     &     &      &     &-1 &\geq  &0
        \end{array}
        $$
        }        
        \subcaption{$\mu=k,\forall k> 0$}
    \end{minipage}
    \caption{Constraint consecution tabular}
\end{figure}

We scale the leftmost coefficient column $k$ to be $1,\lambda'_i$'s by multiplying $\frac{1}{k}$, where $\lambda'_i=\frac{\lambda_i}{k}$. The coefficient of invariants after transformation is the same as the previous tabular.

\begin{figure}[h]
    \centering
    \begin{minipage}[b]{0.6\linewidth}
        \centering
        \setlength{\arraycolsep}{0mm}{\small
        $$
        \begin{array}{c|rcccrcrcccrcrcc}
            1        &c_{\tsLoc, 1} x_1     &+  &\cdots &+ &c_{\tsLoc, n} x_n    &    &     &     &     &     &      &+  &d_{\tsLoc}  &\geq  &0  \\
        
            \lambda'_{0}  &       &    &       &    &       &    &     &     &     &     &      &     &1  &\geq     &0      \\
            
            \lambda'_{1}  &a_{11} x_1       &+ &\cdots &+ &a_{1n} x_n       &+                                     
                             &a_{11}' x_{1}'   &+ &\cdots &+ &a_{1n}' x_{n}'   &+  &b_1&\geq     &0      \\ 
            
            \vdots         &\vdots                &    &         &    &\vdots                &     
                             &\vdots                &    &         &    &\vdots                &     &\vdots         &   &      \\ 
            
            \lambda'_{m}  &a_{m1} x_1       &+ &\cdots &+ &a_{mn} x_n       &+  
                             &a_{m1}' x_{1}'   &+ &\cdots &+ &a_{mn}' x_{n}'   &+  &b_m&\geq     &0      \\ 
            \hline
                             
                             &       &    &       &    &       &    &     &     &     &     &      &     &-\frac{1}{k} &\geq  &0
        \end{array}
        $$
        }        
        
    \end{minipage}
    \caption{Transformed constraint consecution tabular}

\end{figure}

Consider all the constraint consecution tabular and choose the maximum $k_{max}$ of their $k$'s. Then, we scale $\lambda_i$'s, $c_{l,i}$'s and $d_{\tsLoc}$ by $k_{max}$ and modify $\lambda'_0$ to be $\lambda''_0=\lambda'_0+\frac{k_{max}}{k}-1$. Note that it's necessary to select the fixed $k_{max}$ to scale $c_{l,i}$'s, so that we avoid affecting the solution in the transition consecution tabular as transition consecution tabular is always satisfied if we multiply $c$ with fixed constant $k_{max}$.

\begin{figure}[ht]
    \centering
    \begin{minipage}[b]{0.8\linewidth}
        \centering
        \setlength{\arraycolsep}{0mm}{\small
        $$\centering
        \begin{array}{c|rcccrcrcccrcrcc}
            1        &k_{max}\cdot c_{\tsLoc, 1} x_1     &+  &\cdots &+ &k_{max}\cdot c_{\tsLoc, n} x_n    &    &     &     &     &     &      &+  &k_{max}\cdot d_{\tsLoc}  &\geq  &0  \\
        
            \lambda''_{0}  &       &    &       &    &       &    &     &     &     &     &      &     &1  &\geq     &0      \\
            
            \lambda'_{1}  &a_{11} x_1       &+ &\cdots &+ &a_{1n} x_n       &+                                     
                             &a_{11}' x_{1}'   &+ &\cdots &+ &a_{1n}' x_{n}'   &+  &b_1&\geq     &0      \\ 
            
            \vdots         &\vdots                &    &         &    &\vdots                &     
                             &\vdots                &    &         &    &\vdots                &     &\vdots         &   &      \\ 
            
            \lambda'_{m}  &a_{m1} x_1       &+ &\cdots &+ &a_{mn} x_n       &+  
                             &a_{m1}' x_{1}'   &+ &\cdots &+ &a_{mn}' x_{n}'   &+  &b_m&\geq     &0      \\ 
            \hline
                             
                             &       &    &       &    &       &    &     &     &     &     &      &     &-1 &\geq  &0
        \end{array}
        $$
        }     
    \end{minipage}
    \caption{Equivalent constraint transformation tabular}
\end{figure}

Thus we prove there is no accuracy loss as we set $\mu=1$ by means of coefficient scaling.

In the implementation phase, we decompose the polyhedron derived from infeasible implications and subsequently union the resulting polytope and polyhedral cone into the final solution set. The validity of this methodology is further substantiated in Appendix~\ref{sec:appendix_minkowski}.

\section{Proof of correctness of solutions to invariant sets in the implementation part}\label{sec:appendix_minkowski}
\par
In the implementation, we utilize decomposition theorem of polyhedra and decompose the solution set of invariant when $\mu=1$ to be a polytope $P$ and a polyhedral cone $C$. Similarly, we denote $F$ as the solution set of invariants, which contains the coefficient of invariants at any locations and $F'$ as the solution set of invariants when $\mu=1$ in all the consecution tabular. Then the union of the polytope and polyhedral cone is chosen as our solution set of invariants $F^*=P\cup C$, where $F'=P+C$.

\begin{lemma}
    \textbf{Decomposition theorem of polyhedra.} A set $P$ of vectors in Euclidean space is a polyhedron if and only if $P=Q+C$ for some polytope $Q$ and some polyhedral cone $C$.
\end{lemma}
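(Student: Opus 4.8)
The plan is to reduce both implications to the Minkowski--Weyl duality for polyhedral cones, i.e. the equivalence between the halfspace representation $\{\mathbf{x} : \mathbf{A}\mathbf{x} \le \mathbf{0}\}$ and the generator representation $\{\sum_i \lambda_i \mathbf{g}_i : \lambda_i \ge 0\}$ of a cone, which is already available as the Farkas--Minkowski--Weyl Theorem (Corollary 7.1a) recalled in the preliminaries. The bridge between the inhomogeneous world of polyhedra and the homogeneous world of cones is the standard \emph{homogenization} (or coning) construction: lift everything from $\Rset^n$ into $\Rset^{n+1}$ by appending a coordinate $t$, work with the resulting cone, and then slice it at $t=1$ to recover the original set.

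For the forward direction ($P$ a polyhedron $\Rightarrow$ $P = Q + C$), I would first write $P = \{\mathbf{x} \in \Rset^n : \mathbf{A}\mathbf{x} \le \mathbf{b}\}$ and form the cone
$$
\widehat{C} = \{(\mathbf{x}, t) \in \Rset^{n+1} : \mathbf{A}\mathbf{x} - t\,\mathbf{b} \le \mathbf{0},\ t \ge 0\},
$$
which is a polyhedral cone in halfspace form, hence finitely generated by Minkowski--Weyl. One checks directly that $P = \{\mathbf{x} : (\mathbf{x},1) \in \widehat{C}\}$. I would then partition a finite generating set of $\widehat{C}$ by the sign of the last coordinate: generators with $t > 0$ rescale to $t = 1$, yielding points $\mathbf{q}_1,\dots,\mathbf{q}_s$, while generators with $t = 0$ yield ray directions $\mathbf{r}_1,\dots,\mathbf{r}_\ell$. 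Expanding any nonnegative combination realizing $(\mathbf{x},1)$ and collecting the $t=1$ normalization gives $P = \mathrm{conv}(\mathbf{q}_1,\dots,\mathbf{q}_s) + \mathrm{cone}(\mathbf{r}_1,\dots,\mathbf{r}_\ell)$; the first summand is the polytope $Q$ and the second the polyhedral cone $C$.

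For the reverse direction ($P = Q + C \Rightarrow P$ a polyhedron), I would dualize the same construction. Writing $Q = \mathrm{conv}(\mathbf{v}_1,\dots,\mathbf{v}_s)$ and $C = \mathrm{cone}(\mathbf{r}_1,\dots,\mathbf{r}_\ell)$, form the cone $\widehat{C}$ generated by the lifted vectors $(\mathbf{v}_i, 1)$ and $(\mathbf{r}_j, 0)$. This is finitely generated, so the other direction of Minkowski--Weyl yields a halfspace description $\widehat{C} = \{\mathbf{z} \in \Rset^{n+1} : \mathbf{M}\mathbf{z} \le \mathbf{0}\}$. Since $(\mathbf{x},1) \in \widehat{C}$ forces the conic multipliers of the $(\mathbf{v}_i,1)$ to sum to $1$, one gets $P = \{\mathbf{x} : (\mathbf{x},1) \in \widehat{C}\}$, and intersecting the system $\mathbf{M}\mathbf{z}\le\mathbf{0}$ with $t=1$ (moving the now-constant $t$-column into the offset) presents $P$ as a finite system of affine inequalities in $\mathbf{x}$, i.e.\ a polyhedron.

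The main obstacle I anticipate is bookkeeping rather than conceptual, and it lives in the forward direction: one must verify that the generators of $\widehat{C}$ split cleanly into the $t>0$ and $t=0$ classes (using $\widehat{C}\subseteq\{t\ge 0\}$), and that after rescaling the $t>0$ generators to $t=1$ the resulting point part is a genuine \emph{convex} combination of the $\mathbf{q}_i$ rather than merely a conic one. The key point is that for $\mathbf{x}\in P$ the lifted vector $(\mathbf{x},1)$ has total $t$-weight equal to $1$, so the multipliers on the $\mathbf{q}_i$ automatically sum to $1$; tracking this normalization precisely is what makes $Q$ bounded and identifies $C$ with the recession cone of $P$. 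Once this weight accounting is in hand, both inclusions in $P = Q + C$ and the polyhedrality in the converse follow routinely.
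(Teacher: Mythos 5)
Your proof is correct, but note that the paper itself does not prove this lemma at all: it is stated in Appendix~C as a black-box citation of the classical decomposition theorem (Corollary~7.1b in Schrijver's book, the companion to the Corollary~7.1a that the paper recalls in its preliminaries) and is then used directly to split the solution set of the Farkas constraints into a polytope plus a polyhedral cone. What you supply is the standard derivation of that polyhedron-level statement from the cone-level Minkowski--Weyl duality via homogenization, and both directions check out: in the forward direction the lift $\widehat{C}=\{(\mathbf{x},t): \mathbf{A}\mathbf{x}-t\mathbf{b}\le \mathbf{0},\, t\ge 0\}$ is contained in $\{t\ge 0\}$, so its generators do split into the $t>0$ and $t=0$ classes, and reading off the last coordinate of a conic combination realizing $(\mathbf{x},1)$ forces the rescaled-point multipliers to sum to $1$, which is exactly the normalization you flag; in the reverse direction the same coordinate argument shows the slice of the lifted cone at $t=1$ is precisely $Q+C$, and substituting $t=1$ into the halfspace description turns the homogeneous system into affine inequalities in $\mathbf{x}$. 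The only cosmetic gap is the degenerate case $P=\emptyset$ (no generators with $t>0$, so $Q$ is the empty convex hull), which is harmless under the usual conventions but worth a sentence if you want the statement to hold verbatim for all polyhedra. Since the paper leans on this lemma to justify taking the union of polytope and cone generators as the invariant solution set (Appendix~C), writing out your argument would make that appendix self-contained rather than reliant on an external citation.
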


Now, we are going to prove the correctness of $F^*$. I.e., the vectors in polytope and polyhedral cone are both the coefficient of invariants in different locations.

Consider the relation between $F$ and $F'$, we define that $F''=\{k\cdot \boldsymbol{c}\mid \boldsymbol{c}\in F',k>0\}$.

\begin{proposition}
    $F=F''$
\end{proposition}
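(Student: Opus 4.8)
The plan is to prove the two inclusions $F''\subseteq F$ and $F\subseteq F''$ separately, the engine in both being a reparametrization that trades the multiplier $\mu$ against a global positive scaling of the coefficient vector. The structural fact I would exploit is that in each constraint consecution tabular the invariant coefficients enter \emph{only} through the single term $\mu\cdot\eta(\ell)$ (the target assertion $\eta(\ell')'$ is absent, since the derived row is $-1\ge 0$), while every remaining entry -- the guard coefficients $a_{ij},a_{ij}',b_i$ and the multipliers $\lambda_0,\dots,\lambda_m$ -- is independent of $\boldsymbol{c}$. Consequently, replacing $(\boldsymbol{c},\mu)$ by $(t\boldsymbol{c},\mu/t)$ for any $t>0$ leaves the product $\mu\boldsymbol{c}$, and hence every variable-coefficient equation, unchanged; only the inhomogeneous constant equation $\mu\,d_\ell+\lambda_0+\sum_i\lambda_i b_i=-1$ needs care, and there the non-negative slack $\lambda_0$ can absorb the mismatch.

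For $F''\subseteq F$ I would argue directly. Take $\boldsymbol{c}\in F'$, witnessed by multipliers with $\mu=1$ in every tabular, and let $k>0$. Reusing the same $\lambda$'s but setting $\mu:=1/k$ in every tabular, one checks that $k\boldsymbol{c}$ satisfies all variable-coefficient equations (since $(1/k)(k c_{\ell,j})=c_{\ell,j}$) and all constant equations (since $(1/k)(k d_\ell)=d_\ell$), so $k\boldsymbol{c}\in F$. As $k$ is arbitrary and every element of $F''$ is of this form, $F''\subseteq F$.

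For the reverse inclusion $F\subseteq F''$, take $\boldsymbol{c}\in F$ witnessed by per-transition multipliers $\mu_\tau>0$ and $\lambda^\tau_j\ge 0$. Following the scaling device of Appendix~\ref{sec:appendix_mu1}, I would set $k_{\max}:=\max_\tau \mu_\tau$, scale the whole vector to $\boldsymbol{c}':=k_{\max}\boldsymbol{c}$, rescale the $\lambda^\tau_j$ correspondingly, and redefine the constant slack in tabular $\tau$ as $\lambda_{0,\tau}'':=\lambda_{0,\tau}'+k_{\max}/\mu_\tau-1$. Since $k_{\max}\ge\mu_\tau$ and $\lambda_{0,\tau}'\ge 0$, each new slack is non-negative, and a short computation shows that with $\mu=1$ the scaled data derive $-1\ge 0$ in every tabular; hence $\boldsymbol{c}'\in F'$ and $\boldsymbol{c}=(1/k_{\max})\,\boldsymbol{c}'\in F''$.

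The main obstacle is precisely the \emph{simultaneity} across all tabulars in this second inclusion: a single coefficient vector $\boldsymbol{c}$ is shared by every transition, so one global scaling factor must serve all of them at once, even though the witnessing multipliers $\mu_\tau$ differ from transition to transition. This is what forces the uniform choice $k_{\max}=\max_\tau\mu_\tau$ together with the per-tabular readjustment of $\lambda_0$; the delicate point to verify is that this readjustment keeps every $\lambda_{0,\tau}''$ non-negative, so that the derivation of $-1\ge 0$ remains a legitimate Farkas certificate, which holds exactly because $k_{\max}$ dominates each $\mu_\tau$. I would therefore isolate this non-negativity check as the technical heart of the argument and treat the remaining coefficient bookkeeping as routine.
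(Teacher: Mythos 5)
Your proposal is correct and follows essentially the same route as the paper: both inclusions are obtained by trading the multiplier $\mu$ against a uniform positive scaling of the coefficient vector, with the reverse inclusion handled exactly as in Appendix~B via the global factor $k_{\max}=\max_\tau\mu_\tau$ and the slack readjustment $\lambda_0''=\lambda_0'+k_{\max}/\mu_\tau-1$, whose non-negativity follows from $k_{\max}\ge\mu_\tau$. Your write-up is in fact somewhat more explicit than the paper's about why a single global scaling must serve all tabulars simultaneously and about verifying the non-negativity of the adjusted slack, but the underlying argument is the same.
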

\begin{proof}
    We first consider $F\subseteq F''$, which equivalent to prove $\forall \boldsymbol{c}\in F, \exists \boldsymbol{c_0} \in F',k\in R$ such that $k\cdot \boldsymbol{c_0}=\boldsymbol{c}$. We consider the transition consecution tabular. Note that, the consecution tabular is always satisfied if we scale the invariant $\mathrm{\eta}(\tsLoc)$ and $\mathrm{\eta}(\tsLoc')$ simultaneously.

    Then consider the constraint consecution tabular. We have proved if we multiply $\boldsymbol{c_0}\in F'$ with $k_{max}$, we can find corresponding $\boldsymbol{c}=k_{max}*\boldsymbol{c_0}$ is the solution to constraint consecution tabular with $\mu=k,\forall k>0$. (the definition of $k_{max}$ and proof is given in Appendix~\ref{app:thm_infeasible})

    Thus, we prove that $\forall \boldsymbol{c}\in F$, there exists $\boldsymbol{c_0}\in F'$ and $k_{max}\in R$ such that $k_{max}*\boldsymbol{c_0}=\boldsymbol{c}$ and have $F\subseteq F''$. 

    Secondly, we prove $F''\subseteq F$. From the definition of $F''$, if $\boldsymbol{c}\in F'$, we multiply $\frac{1}{k}$ to $\mu=1$ in the constraint consecution tabular, and $k\cdot \boldsymbol{c}$ satisfy the transformed tabulars and other transition consecution tabulars. So $k\cdot \boldsymbol{c}\in F$, and we have $F''\subseteq F$.

    So $F=F''$.
\end{proof}

We utilize the decomposition theorem in $F'$ and have $F'=P+C$, where $P$ is a polytope and $C$ is a polyhedral cone. From the properties of polytope and polyhedral cone, a polytope is a convex hull of finitely many vectors and a polyhedral cone is finitely generated by some vectors.

\begin{equation}
    P=\{\boldsymbol{p_1},\boldsymbol{p_2},\ldots,\boldsymbol{p_n}\}
\end{equation}
\begin{equation}
    C=\{\boldsymbol{g_1},\boldsymbol{g_2},\ldots,\boldsymbol{g_m}\}
\end{equation}

Note that the addition in the theorem means Minkowski sum, Thus,
\begin{equation}
    F'=P+C=\{\boldsymbol{p_1},\boldsymbol{p_2},\ldots,\boldsymbol{p_n};\boldsymbol{g_1},\boldsymbol{g_2},\ldots,\boldsymbol{g_m}\}
\end{equation}

Where $\boldsymbol{p_i}$'s represents the vectors finitely generate the polytope $P$ and $g_i$'s represents the vectors finitely generate the polyhedral cone $C$. That means that $\forall v \in F'$, $v=a_1\boldsymbol{p_1}+\cdots+a_n \boldsymbol{p_n}+b_1 \boldsymbol{g_1}+\cdots + b_m \boldsymbol{g_m}$,where $\sum_i a_i=1$ (from the definition of convex hull) and $a_i,b_i\geq 0,\forall i$. Consider $F=F''=\{k\cdot \boldsymbol{c}\mid \exists k>0,k\cdot \boldsymbol{c}\in F,\boldsymbol{c}\in F'\}$, it's concluded that $\forall \boldsymbol{v} \in F, \boldsymbol{v}=a'_1\boldsymbol{p_1}+\cdots+a'_n \boldsymbol{p_n}+b'_1+\boldsymbol{g_1}+\cdots + b'_m \boldsymbol{g_m}$, where $a'_i=ka_i,k>0$ and $b'_i=kb_i,k>0$.

Thus, it's obvious that $\forall \boldsymbol{p}\in P$, we have $\boldsymbol{p}\in F$ as we can set $\boldsymbol{g_i}=0,\forall i$. However, we can not use the similar method to prove $\forall \boldsymbol{c} \in C,\boldsymbol{c}\in F$, as the $\sum_i a_i$ equal to a non-zero number and $a_i\geq 0, \forall i$. So to prove the $P\cup C$ is also the solution set of invariants, we should consider the practical implications of invariant.

We have known that $F=\{\boldsymbol{v}\mid \boldsymbol{v}=a'_1\boldsymbol{p_1}+\cdots+a'_n \boldsymbol{p_n}+b'_1 +_1+\cdots + b'_m \boldsymbol{g_m},a'_i\geq 0,\boldsymbol{g_i}\geq 0 \forall i,\sum_i a'_i>0\}$ corresponding to the solution of $\boldsymbol{v}^T x<=d_{\tsLoc}$.

Destruct $F$ to be 
\begin{equation}
    F=\{p+c\mid p=a'_1\boldsymbol{p_1}+\cdots+a'_n \boldsymbol{p_n},c=\boldsymbol{g_1}+\cdots + b'_m \boldsymbol{g_m},a'_i\geq 0,\boldsymbol{g_i}\geq 0 \forall i,\sum_i a'_i>0\}
\end{equation}

From the above conclusion, $P\subseteq F$, which means $\forall \boldsymbol{p} \in P$, $\boldsymbol{p}^T \boldsymbol{x}<=d_{\tsLoc}$ is satisfied. Also, $\forall \boldsymbol{v}\in F,\boldsymbol{v}=\boldsymbol{p}+\boldsymbol{c},\boldsymbol{p}\in P, \boldsymbol{c}\in C$, and $(\boldsymbol{p}+\boldsymbol{c})^T \boldsymbol{x}<=d_\tsLoc$. 

Consider $\forall \varepsilon>0$, we have $(\varepsilon \boldsymbol{p}+\boldsymbol{c})^T \boldsymbol{x}<=d_{\tsLoc}$. Thus, we finally conclude that $\lim_{\varepsilon \rightarrow 0} (\varepsilon \boldsymbol{p}+\boldsymbol{c})^T \boldsymbol{x}=\boldsymbol{c}^T \boldsymbol{x}<=d_{\tsLoc}$, which means for all $\boldsymbol{c}\in C$, $\boldsymbol{c}$ is also a solution to invariants. Thus we prove $C\in F$, and $P \cup C \in F$.

So it's correct to directly use the union of the polytope and the polyhedral cone to represents the solution set of invariants.
\clearpage
\section{Control Flow Transformation: Loop Summary}
\label{app:loop_sum}
In this section, we consider further optimizations. The first is the tackling of infeasible implications (i.e., "$-1\ge 0$") in the application of Farkas' Lemma. 
This situation has not been handled in the previous approaches in Farkas' Lemma. The infeasible implication is important in the generation of disjunctive invariants since ignoring them would break the internal disjunctive feature of the loop, thus leading to the failure of the generation of the desired disjunctive invariant. A key difficulty to tackle the infeasible implication is that we obtain general polyhedra rather than polyhedral cones in establishing the constraints for invariant generation, and thus cannot directly apply the generator computation over polyhedral cones. To address this difficulty, we show that it suffices to fix the nonlinear parameter $\mu$ multiplied to the template 
in the Farkas tabular (Figure~\ref{tab:farkas}) to $1$ and include the generators of both the polytope and the polyhedral cone of the Minkowski decomposition of the polyhedron resulting from the constraint solving of the PAP after the application of Farkas' Lemma. As its correctness proof is somewhat technical, we relegate them to Appendix~\ref{app:thm_infeasible} and Appendix~\ref{sec:appendix_minkowski}.

The second is the extension of our approach to nested loop. The main difficulty here is how to handle the inner loops in a nested loop. Recall that in the previous section, we transform a non-nested loop into a canonical form and further transforms it into an affine transition system. This cannot be applied to nested loops since the inner loops does not obey this canonical form. To address this difficulty, we consider to use the standard technique of loop summary to abstract the input-output relationship of the inner loops, and use the loop summaries of inner loops to construct the overall affine transition system for the outer loops. 

Given a nested affine while loop $W$, our approach works by first recursively computing the loop summary $\ProcSmry_{W'}$ for each inner while loop $W'$ in $W$ (from the innermost to the outermost), 
and then 
tackling the main loop body via the control flow of the loop body and the loop summaries $\ProcSmry_{W'}$ of the inner loops. Below we fix a nested affine while loop $W$ with variable set $\tsVars=\{\tsVar_1,\dots,\tsVar_n\}$ and present the technical details. 

The most involved part in our approach is the transformation of the main loop $W$ into its corresponding \LTS{}.  
To address the inner loops, our algorithm works with the \emph{control flow graph} (CFG) $H$ of the loop body of the loop $W$ and considers the \emph{execution paths} in this CFG. The CFG $H$ is a directed graph whose vertices are the program counters of the loop body and whose edges describe the one-step jumps between these program counters. Except for the standard semantics of the jumps emitting from assignment statements and conditional branches, for a program counter that represents the entry point of an inner while loop that is not nested in other inner loops, we have the special treatment that the jump at the program counter is directed to the termination program counter of this inner loop in the loop body of $W$ (i.e., skipping the execution of this inner loop). An \emph{execution path} in the CFG $H$ is a directed path of program counters that ends in (i) either the termination program counter of the loop body of $W$ without visiting a  program counter that represents the \textbf{break} statement or (ii) a first \textbf{break} statement without visiting prior \textbf{break} statements. 
An example is as follows.


\begin{example}\label{eg:janne_cfg}
Consider the janne\_complex program from ~\cite{DBLP:conf/vmcai/BoutonnetH19} in Figure~\ref{fig:jannecomplex}.  
The CFG of the program is given in Figure~\ref{fig:janne_cfg} where the nodes correspond to the program counters, the directed edges with guards specifies the jumps and their conditions, and the affine assignments are given in the program counters $A_1,A_2,A_3$. 

\newsavebox{\jannecomplex}
\begin{lrbox}{\jannecomplex}
\begin{lstlisting}[mathescape]
while($x<30$){
    while($y<x$){
        if ($y>5$) $y=y*3$;
        else $y=y+2$;
        if (y>=10 && y<=12) $x=x+10$;
        else $x=x+1$;
    }
    $x=x+2$; $y=y-10$;
}
\end{lstlisting}
\end{lrbox}

\begin{figure}[h]
\centering
\usebox{\jannecomplex}
\caption{The janne\_complex program}
\label{fig:jannecomplex}
\end{figure}

\begin{figure}[h]
\centering
\includegraphics[width=0.8\linewidth]{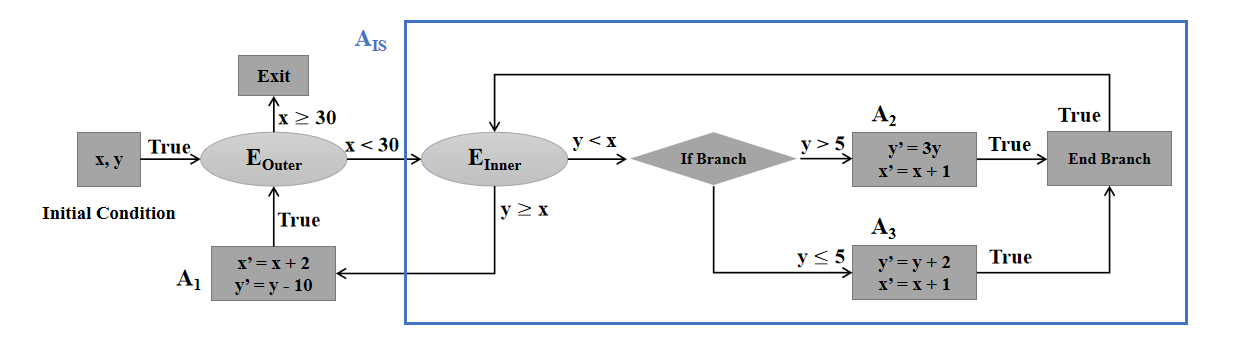}
\caption{The CFG of \emph{janne\_complex}~\cite{DBLP:conf/vmcai/BoutonnetH19}}
\label{fig:janne_cfg}
\end{figure}

We denote by $W$ the outer loop with entry point $E_{\mathrm{Outer}}$, and by $W'$ the inner loop with entry point $E_{\mathrm{inner}}$.    
The execution path starts at the \emph{Initial Condition} $\left[x, y \right]$, jumps to the next vertices along the edge whose condition is satisfied (e.g., \emph{True} is tautology, \emph{$x < 30$} is satisfied when variable $x$ value is less than $30$, etc.), and terminates in the \emph{Exit} statement. 
The only execution path for the loop body of $W$ is 
$A_{IS} \rightarrow A_{1}$,
for which we abstract the whole inner loop by $A_{IS}$. 
\qed 
\end{example}

Based on the CFG $H$ and the execution paths, our approach constructs the \LTS{} for the outer loop $W$ as follows. Since the output of an inner while loop $W'$ in $W$ cannot be exactly determined from the input to the loop $W'$, we first have fresh output variables 
$\overline{x}_{W',1},\dots,\overline{x}_{W',n'}$ to represent the output values of the variables $\overline{x}_{W',1},\dots,\overline{x}_{W',n'}$ after the execution of the inner loop $W'$. These output variables are used to express the loop summaries of these inner loops. 

Then, to get the numerical information from execution paths,  we symbolically compute the values of the program variables at each program counter in an execution path. In detail, given an execution path $\omega=\iota_1,\dots,\iota_k$ where each $\iota_i$ is a program counter of the loop body of the loop $W$, our approach computes the affine expressions $\alpha_{\tsVar,i}$ and PAPs $\beta_i$ (for $\tsVar\in \tsVars$ and $1\le i\le k$) over the program variables in $\tsVars$ (for which they represent their initial values at the start of the loop body of $W$ here) and the fresh output variables. 
The intuition is that (i) each affine expression $\alpha_{\tsVar,i}$ represents the value of the variable $\tsVar$ at the program counter $\iota_i$ along the execution path $\omega$ and (ii) each PAP $\beta_i$ specifies the condition that the program counter $\iota_i$ is reached along the execution path $\omega$. The computation is recursive on $i$ as follows.

Denote the vectors $\alpha_i:=(\alpha_{\tsVar_1,i},\dots, \alpha_{\tsVar_n,i})$ and  $\overline{x}_{W'}=(\overline{x}_{W',1},\dots,\overline{x}_{W',n'})$. For the base case when $i=1$, we have $\alpha_1=(\tsVar_1,\dots,\tsVar_n)$ and $\beta_1=\mathbf{true}$ that specifies the initial setting at the start program counter $\iota_1$ of the loop body of the original loop $W$. For the recursive case, suppose that our approach has computed the affine expressions in $\alpha_{i}$ and the PAP $\beta_i$. We classify four cases below:

\begin{itemize}
\item \emph{Case 1:} The program counter $\iota_{i}$ is an affine assignment statement $\mathbf{x}:=\mathbf{F}(\mathbf{x})$. Then we have that $\alpha_{i+1}= \alpha_i[\mathbf{F}(\mathbf{x})/\mathbf{x}]$ and $\beta_{i+1}:=\beta_i$.  
\item \emph{Case 2:} The program counter $\iota_{i}$ is a conditional branch with branch condition $b$ and the next program counter $\iota_{i+1}$ follows its \textbf{then}-branch. Then the vector $\alpha_{i+1}$ is the same as $\alpha_i$, and the PAP $\beta_{i+1}$ is obtained as $\beta_{i+1}=\beta_i\wedge b$.   
\item \emph{Case 3:} The program counter $\iota_{i}$ is a conditional branch with branch condition $b$ and the next program counter $\iota_{i+1}$ follows its \textbf{else}-branch. The only difference between this case and the previous case is that $\beta_{i+1}$ is obtained as $\beta_{i+1}:=\beta_i\wedge \neg b$.   
\item \emph{Case 4:} The program counter $\iota_{i}$ is the entry point of an inner while loop $W'$ of $W$ and $\iota_{i+1}$ is the successor program counter outside $W'$ in the loop body of $W$. Then $\alpha_{i+1}:=\overline{x}_{W'}$ and $\beta_{i+1}:=\ProcSmry_{W'}(\alpha_i, \overline{x}_{W'})$. Here we use the ouput variables to express the loop summary. Note that the loop summary $\ProcSmry_{W'}$ is recursively computed. 
\end{itemize}  

\begin{example}\label{eg:evolution}
Continue with the execution path in Example~\ref{eg:janne_cfg}. 
\begin{figure}[h]
    \begin{footnotesize}
    $$
    \begin{array}{c}
        \alpha_{1}=[x,y],\beta_{1}=\mathbf{true} 
        \quad \underrightarrow{x<30} \quad
        \alpha_{2}=[x,y],\beta_{2}=\beta_{1} \wedge x<30 
        \quad \underrightarrow{A_{IS}} \\
        \alpha_{3}=[\overline{x}_{W'},\overline{y}_{W'}],\beta_{3}=\beta_{2} \wedge \ProcSmry_{W'}(\alpha_{2}, \alpha_{3}) 
        \quad \underrightarrow{A_{1}} \quad
        \alpha_{4}=[\overline{x}_{W'}+2,\overline{y}_{W'}-10],\beta_{4}=\beta_{3}
    \end{array}
    $$
    \end{footnotesize}
    \caption{The evolution of $\alpha_{i}$ and $\beta_{i}$ for the execution path of $W$ in Figure~\ref{fig:janne_cfg}}
    \label{fig:outer_inner}
\end{figure}
The evolution of $\alpha_{i}$ and $\beta_{i}$ with the initial setting $\alpha_{1}=[x,y],\beta_{1}=\mathbf{true}$ is given in Figure~\ref{fig:outer_inner}.  
\qed 
\end{example}

After the $\alpha_i,\beta_i$'s are obtained for an execution path $\omega=\iota_1,\dots,\iota_k$ from the recursive computation above, we let the PAP $\Psi_\omega:=\bigwedge_{i\in I} \beta_i$ where the index set $I$ is the set of all $1\le i\le k$ such that the program counter $\iota_i$ corresponds to either a conditional branch or the entry point of an inner while loop, and the vector of affine expression $\alpha_{\omega}:=\alpha_{k+1}$. Note that the PAP $\Psi_\omega$ is the condition that the execution of the loop body follows the execution path $\omega$, and the affine expressions in the vector $\alpha_\omega$ represent the values of the program variables after the execution path $\omega$ of the loop body of $W$ in terms of the initial values of the program variables and the fresh variables for the output of the inner while loops in $W$.

Finally, our approach constructs the \LTS{} for the loop $W$ and we only present the main points:
\begin{itemize}
\item First, for each execution path $\omega$ of the loop body of $W$, we have a standalone location $\tsLoc_\omega$ for  this execution path. Recall that we abstract the inner loops, so that the execution paths can be finitely enumerated. 
\item Second, for all locations $\tsLoc_\omega,\tsLoc_{\omega'}$ (from the execution paths $\omega,\omega'$), we have the transition $\tau_{\omega,\omega'}:=(\tsLoc_{\omega}, \tsLoc_{\omega'}, \Psi_{\omega} \wedge \Psi'_{\omega'}\wedge \mathbf{x}'=\alpha_\omega)$ which means that if the execution path in the current iteration of the loop $W$ is $\omega$, then in the next iteration the execution path can be $\omega'$ with the guard condition $\Psi_{\omega} \wedge \Psi'_{\omega'}\wedge \mathbf{x}'=\alpha_\omega$ that comprises the conditions for the execution paths $\omega,\omega'$ and the condition $\mathbf{x}'=\alpha_\omega$ for the next values of the program variables.
\item Third, we enumerate all possible initial locations $l_\omega$, along with their corresponding initial condition $\tsInitcond= G \wedge \Psi_{\omega}$. To derive loop summary, we follow the standard technique (see e.g.~\cite{DBLP:conf/vmcai/BoutonnetH19}) to include the input variables $\tsVars_{\mathsf{in}}$ and conjunct the affine assertion $\bigwedge_{\tsVar\in\tsVars} \tsVar=\tsVar_{\mathsf{in}}$ into each disjunctive clause of the initial condition $\tsInitcond$. 

Manually specified initial condition can also be conjuncted into 
$\tsInitcond$. 
\end{itemize}

A detailed process that handles $\mathbf{break}$ statement is similar to the unnested situation. Again, we can remove invalid transitions by checking whether their guard condition is satisfiable or not. 

At the end of the illustration of our algorithms, we discuss possible extensions as follows. 

\end{document}